\numberwithin{equation}{section}
\newtheorem{definition}{Definition}
\newtheorem{theorem}{Theorem}
\newcommand{\p}{\partial}
\newcommand{\pb}{\bar\partial}
\newcommand{\zb}{{\bar z}}
\renewcommand{\AA}{{\mathbb A}}
\newcommand{\UU}{{\mathbb U}}
\newcommand{\Lc}{{\mathcal L}}
\newcommand{\QQ}{{\mathbb Q}}
\newcommand{\QQt}{{\widetilde {\mathbb Q}}}
\newcommand{\ZZ}{{\mathbb Z}}
\newcommand{\PP}{{\mathbb P}}
\newcommand{\EE}{{\mathbb E}}
\newcommand{\FF}{{\mathbb F}}
\newcommand{\FFt}{{\widetilde {\mathbb F}}}
\newcommand{\VV}{{\mathbb V}}
\newcommand{\lb}{[\![}
\newcommand{\rb}{]\!]}
\newcommand{\BB}{{\mathbb B}}
\newcommand{\GG}{{\mathbb G}}
\newcommand{\HH}{{\mathbb H}}
\newcommand{\N}{{\nabla}}
\renewcommand{\SS}{{\mathbb S}}
\newcommand{\Pc}{{\mathcal P}}
\renewcommand{\a}{{\alpha}}
\newcommand{\ah}{{\widehat\alpha}}
\renewcommand{\b}{{\beta}}
\newcommand{\bh}{{\widehat\beta}}
\renewcommand{\d}{{\delta}}
\newcommand{\ve}{{\varepsilon}}
\newcommand{\g}{{\gamma}}
\newcommand{\gh}{{\widehat \gamma}}
\newcommand{\G}{{\Gamma}}
\renewcommand{\i}{{\iota}}
\renewcommand{\l}{{\lambda}}
\newcommand{\lh}{{\widehat\lambda}}
\newcommand{\m}{{\mu}}
\newcommand{\mh}{{\widehat\mu}}
\newcommand{\n}{{\nu}}
\newcommand{\nh}{{\widehat\nu}}
\renewcommand{\o}{{\omega}}
\newcommand{\oh}{{\widehat\omega}}
\renewcommand{\O}{{\Omega}}
\newcommand{\Oh}{{\widehat\Omega}}
\newcommand{\s}{{\sigma}}
\newcommand{\sh}{{\widehat\sigma}}
\renewcommand{\t}{{\theta}}
\renewcommand{\th}{{\widehat\theta}}
\title{Pure Spinor String and Generalized Geometry}
\author{Dennis Zavaleta}
\date{
  \bigskip
  {\it\small Instituto de Fisica Te\'orica, Universidade Estadual Paulista} \\
  {\it\small Rua Dr. Bento Teobaldo Ferraz 271} \\
  {\it\small Bloco II - Barra Funda} \\
  {\it\small CEP:01140-070 - S\~ao Paulo, Brasil} \\
  [1 \baselineskip]
  {\it Email}: \href{mailto:dennis.zavaleta@unesp.br}{{\tt dennis.zavaleta@unesp.br}}
}
\begin{document}
\maketitle

  \begin{abstract}

  We consider the pure spinor sigma model in an arbitrary curved background.  The use of Hamiltonian formalism allows for a uniform description of the worldsheet fields where matter and ghosts enter the action on the same footing. This approach naturally leads to the language of generalized geometry. In fact, to handle the pure spinor case, we need an extension of generalized geometry. In this paper, we describe such an extension. We investigate the conditions for existence of nilpotent holomorphic symmetries. In the case of the pure spinor string in curved background, we translate the existing computations into this new language and recover previously known results.
  \end{abstract}

\section{Introduction}
The Pure Spinor formalism \cite{Berkovits2000} is a formulation of the superstring that allows for quantization with manifest super-Poincar\'e symmetry. In flat space, the action describes a free theory of matter fields $(Z^M) = (x^m, \t^\a, \th^\ah)$ and bosonic spinors (ghosts) $\l^\a$, $\lh^\ah$ subject to the pure spinor constraints $\l^\a \g^m_{\a\b} \l^\b = \lh^\ah \g^m_{\ah\bh} \lh^\bh = 0$. And also, there exist BRST charges $Q$ and ${\widehat Q}$ that encode the physical states of the theory in their cohomology.

It was shown in \cite{Berkovits2001} that given the Pure Spinor string in a general curved background, the conditions of nilpotency and holomorphicity of the BRST currents imply the Type II Supergravity equations of motion. One characteristic of the Pure Spinor sigma model is that the matter and ghost fields enter the action in different ways; this is,  while the matter fields have a second order kinetic term, the ghosts $\l^\a$ and $\lh^\ah$ are in first order form. Thus, the computations leading to the Type II SUGRA equations of motion did not show a clear geometrical interpretation.

In this paper we fill this gap by analyzing the Pure Spinor action in Hamiltonian form. The existence of this form of the action will initially depend on the invertibility of the Ramond-Ramond background field $P^{\a\ah}$ that appears in the original action but we will see later that this condition can be dropped and still describe a general Type II background. This version of the action will have as target space a graded supermanifold $M$ parametrized by matter and ghosts $(x^m, \t^\m, \th^\mh, \l^\a, \lh^\ah)$ where both type of fields are treated on the same footing.

In understanding the action of sigma models in Hamiltonian form, we need to make use of the concept of a generalized metric that first appeared in \cite{Hitchin2002} in the context of Generalized Complex Geometry \cite{Gualtieri2004}. The generalized metric was one of the elements used to extend Kahler geometry. Although, in the Physics literature, an equivalent structure had already appeared in the work of Gates, Hull and Rocek \cite{Gates1984} when studying $N=(2,2)$ supersymmetric sigma models (See \cite{Zabzine2006} for a review). In essence, for sigma models with target space $M$, a generalized metric gives a way of encoding both a Riemann metric $G$ and a two-form $B$ in a single tensor structure defined on $TM\oplus T^*M$ known as generalized tangent space. However, this ``generalized geometry" is limited in the sense that only allows to describe systems that posses a second order formulation leaving aside the ones with only a first order action or a mixture of both such as the pure spinor action. We will show that these sigma models with different formulations can be treated in a unified manner if we extend the original definition of a generalized metric to a one with a non-definite signature. Such an extension was given in \cite{Severa2018} but was only used to study sigma models with second order formulations having pseudo-Riemannian metrics.

When the symmetries of sigma models are studied, as well as their algebra of currents, there naturally appears the notion of Dorfman brackets $\lb\cdot,\cdot\rb$ \cite{Alekseev2004,Nekrasov2005}. It was shown in \cite{Severa2017} that given $S = \int (G+B)_{mn}\p x^m \pb x^n $, the conditions for $\d x^m = V^m(x)$ to be a symmetry (i.e. $\Lc_V G = 0$ and $\Lc_V B = dF$ for some one-form $F_m(x)$) become equivalent to imposing the preservation of the subbundles $graph(\pm G+B)\subset TM\oplus T^*M$ under the action of $\lb (V, F),\ \cdot\ \rb$. In the context of generalized geometry, these subbundles $graph(\pm G+B)$ are the $(\pm 1)$-eigenspaces $(TM\oplus T^*M)_\pm$ of a generalized metric. We will show that a similar situation occurs when we deal with the generalized metric with non-definite signature where these eigenspaces cannot be solved to be the graphs of some tensors anymore. Though still, the conditions of $(V,F)$ generating a symmetry will be equivalent to either the preservation of these eigenspaces or the fact that the Lie derivative of $(V,F)$ \cite{Kotov2010} on the generalized metric gives zero.

This paper is organized as follows. In section 2, we start with the sigma model of the bosonic string in curved space $M$ and study its action in Hamiltonian form. We see there is a matrix that characterizes the theory known as a {\it generalized metric} $\AA$ on $TM\oplus T^*M$. Then, we propose this form of the action as a starting point such as to include theories that do not admit a second order form. Lastly, for this generalization we obtain that the conditions to have holomorphic and nilpotent currents can be written roughly in terms of the Lie derivative of $\AA$ and the nilpotency of a section of $TM\oplus T^*M$ w.r.t $\lb\cdot,\cdot\rb$.

In section 3, we take the action of the Pure Spinor string in a general curved background and then transform it into its Hamiltonian form. Initially we assume that this is only valid for backgrounds with an invertible Ramond-Ramond field $P^{\a\ah}$ but later we show that this assumption can be dropped (since every term containing $P^{-1}_{\ah\a}$ gets cancelled). Moreover, we verify that the matrix $\AA_{PS}$ appearing in the pure spinor action (that encodes the vielbein and background fields $(\O, \Oh, C, {\widehat C}, P, S)$) satisfy the properties of a generalized metric. This section finishes by applying the nilpotency and holomorphicity conditions from the section prior to this one and rederiving the Type II SUGRA constraints. We finish discussing the implications that the ghost grading $(\l^\a,\lh^\ah)$ has on the generalized metric.

In section 4 we give our conclusions. We leave for the appendices the following subjects. A discussion on the relation between worldsheet conformal symmetry of the action and the fact that a generalized metric $\AA$ belongs to the orthosymplectic supergroup (when written as a super-matrix). Some theorems and proofs on about generalized metrics. Also, our conventions for super-geometry are given at the end.

\section{Holomorphic and nilpotent currents}
In this section we study the mathematical structures needed to write the conditions of holomorphicity and nilpotency of currents in terms of associated sections on $TM\oplus T^*M$.

\subsection{Sigma models in Hamiltonian form}
Let's start our discussion with the sigma model action of a string moving in a target space $M$ parametrized by coordinates $(x^m)$
  \begin{equation}
  S[x] = \frac12 \int \pb x^n \p x^m (G_{mn} + B_{mn})
  \end{equation}
It is known that a transformation $\d x^m = \ve V^m(x)$ is a symmetry of the action when $\Lc_V G = 0$ and $\Lc_V B = dF$ for some one-form $F_m(x)$. Such a symmetry provide us with a conserved current $(j_z,j_\zb)$, and the expression for each component can be computed using Noether's procedure
  \begin{align}
  j_z & =  \frac12 \p x^n (G_{nm} + B_{nm})V^m + \frac12 \p x^n F_n \label{jz} \\
  j_\zb & = \frac12 \pb x^n (G_{nm} - B_{nm})V^m - \frac12 \pb x^n F_n \label{jzb}
  \end{align}

We can also use the Hamiltonian form of the action which breaks explicit worldsheet conformal symmetry. The action in Hamiltonian form becomes
  \begin{equation}\label{hamil}
  S = \int \p_t x^m p_m - \frac12 (\p_\s x^m\ p_m)\left(
    \begin{array}{cc}
    (G -BG^{-1}B)_{mn} & -(BG^{-1})_m{}^n \\
    (G^{-1}B)^m{}_n & (G^{-1})^{mn}
    \end{array}
  \right) \left(
    \begin{array}{c}
    \p_\s x^n \\
    p_n
    \end{array}
  \right)
  \end{equation}
and the components of the Noether current (\ref{jz}) and (\ref{jzb}) are now
  \begin{align}
  j_z & = \frac12 (V\ F)\left[\left(
    \begin{array}{cc}
    I & 0 \\
    0 & I
    \end{array}
  \right) +  \left(
    \begin{array}{cc}
    -BG^{-1} & G-BG^{-1} B \\
    G^{-1} & G^{-1}B 
    \end{array}
  \right) \right] \left(
    \begin{array}{c}
    p \\
    \p_\s x
    \end{array}
  \right) \label{current-1} \\
  j_\zb & = \frac12  (V\ F) \left[ \left(
    \begin{array}{cc}
    I & 0 \\
    0 & I
    \end{array}
  \right) - \left(
    \begin{array}{cc}
    -BG^{-1} & G-BG^{-1} B \\
    G^{-1} & G^{-1}B
    \end{array}
  \right) \right] \left(
    \begin{array}{c}
    p \\
    \p_\s x
    \end{array}
  \right) \label{current-2}
  \end{align}
The matrices in the action and Noether current composed of the background fields $G,B,G^{-1}$ can be understood in terms of the so-called {\it generalized geometry}. Intuitively it can be understood as follows. Just in the same sense that Riemannian geometry deals with manifolds and metrics defined on their tangent spaces, {\it generalized} geometry will care about a manifold $M$ with a {\it generalized} metric defined on its {\it generalized} tangent space $TM\oplus T^*M$.

The matrix appearing in the action (\ref{hamil}) is what we will call the matrix representation of the generalized metric
  \begin{equation}\label{gen-met-2nd}
  A := \left(
    \begin{array}{cc}
    (G -BG^{-1}B)_{mn} & -(BG^{-1})_m{}^n \\
    (G^{-1}B)^m{}_n & (G^{-1})^{mn}
    \end{array}
  \right)
  \end{equation}
while the actual generalized metric is (in an abstract sense) a symmetric bilinear form on $TM\oplus T^*M$.
This matrix has the following defining%
\footnote{
  These are defining in the sense that any bilinear form $A$ on $TM\oplus T^*M$ with those properties will necessarily be of the form (\ref{gen-met-2nd}). This is, it will be associated to tensors $G_{mn}$ and $B_{mn}$.
}\label{footnote}
properties: it is symmetric and positive-definite, and it satisfies the $O(d,d)$ condition
  \begin{equation}\label{odd-condition}
  A^t \left(
    \begin{array}{cc}
    0 & 1 \\
    1 & 0
    \end{array}
  \right) A = \left(
    \begin{array}{cc}
    0 & 1 \\
    1 & 0
    \end{array}
  \right)
  \end{equation}
where $d$ is the dimension of the target space $M$. Furthermore, to the generalized metric we can associate an endomorphism $U_A$ on $TM\oplus T^*M$ with matrix representation given by
  \begin{equation}\label{invo-operator}
  U_A := A \left(
    \begin{array}{cc}
    0 & 1 \\
    1 & 0
    \end{array}
  \right) = \left(
    \begin{array}{cc}
    -BG^{-1} & G-BG^{-1} B \\
    G^{-1} & G^{-1}B
    \end{array}
  \right)
  \end{equation}
This matrix is the one appearing in both components of the Noether current (\ref{current-1}-\ref{current-2}). The operator $U_A$ has the property of being an involution i.e.
  \begin{equation}
  U_A^2 = \left(
    \begin{array}{cc}
    1 & 0 \\
    0 & 1
    \end{array}
  \right)
  \end{equation}
which is a direct consequence of the $O(d,d)$ condition (\ref{odd-condition}) that $A$ satisfies.

So far what we have done is to take a sigma model in second order formalism and transform its action into its Hamiltonian form. This process can be reversed by solving the equations of motion for $P_m$. However, we are interested in theories that do not admit a second order formulation (i.e. the ones that are purely first order or a mixture of both which the Pure Spinor string is an example of) i.e. theories where the generalized metric $A$ is not of the form (\ref{gen-met-2nd}) anymore. To study these types of theories we need to modify the definition of generalized metric.

The original definition of a generalized metric \cite{Hitchin2002} assumes that $A$ is of the form (\ref{gen-met-2nd}). This assumption can be understood in the following way. When we consider the associated operator to the generalized metric $A$
  \begin{equation}\label{associated}
  U_A := A \left(
    \begin{array}{cc}
    0 & 1 \\
    1 & 0
    \end{array}
  \right) = \left(
    \begin{array}{cc}
    A_m{}^n & A_{mn} \\
    A^{mn} & A^m{}_n
    \end{array}
  \right)
  \end{equation}
we have that is an involution i.e. $(U_A)^2 = 1$, and so we have $(\pm 1)$-eigenspaces which are defined as
  \begin{equation}\label{bundles}
  (TM\oplus T^*M)_\pm := \left\{ (V^m\ F_m)\ |\ (V^m\ F_m) \left(
    \begin{array}{cc}
    A_m{}^n & A_{mn} \\
    A^{mn} & A^m{}_n 
    \end{array}
  \right) = \pm (V^n\ F_n) \right\}
  \end{equation}
providing a way to express the bundle $(TM\oplus T^*M)$ as a direct sum $(TM\oplus T^*M) = (TM\oplus T^*M)_+ \oplus (TM\oplus T^*M)_-$. It turns out that these eigenspaces can always be solved as the graphs of some operators $(\pm G+B)$ i.e. 
  \begin{equation}
  (TM\oplus T^*M)_\pm = graph(\pm G+B):=\{ (\xi^m\ \a_m)\ |\ \a_m = \xi^n (\pm G_{nm} + B_{nm}) \}
  \end{equation}
for $G$ a metric tensor and $B$ a two-form. This ultimately implies that the action possesses a second order formulation (See Appendix \ref{gen-met} for a proof). We will see that in the case of the Pure Spinor sigma model, due to the target space being a graded supermanifold, it becomes impossible to represent these eigenspaces as graphs of some operators $\pm G +B$ without some entries in $G$ and $B$ having negative powers of $\l^\a$ and $\lh^\ah$. The easiest way to see this is to consider the case of flat space which is analyzed in section \ref{flat}.

In particular we must relax the condition of positivity of $A$ to admit an arbitrary signature. Thus, we define%
  \footnote{
  There are three different but equivalent definitions of a generalized metric. These equivalencies are also extended for the generalized metric with indefinite signature. We refer the reader to Appendix \ref{gen-met}.
  } %
the generalized metric with signature $(d+a,d-a)$ \cite{Severa2018} as a bilinear form in $TM\oplus T^*M$ such that its matrix representation $A$ satisfies
  \begin{enumerate}
  \item being symmetric
  \item having signature $(d+a, d-a)$
  \item the $O(d,d)$ property
  \end{enumerate}
Notice that for the value $a = d$ the signature becomes $(2d, 0)$ which is the same as positivity, and we are back to the original definition. However, unlike in the positive definite case, the eigenspaces $(TM\oplus T^*M)_\pm$ are not graphs of some tensors in general. For this to happen, the following two conditions must be satisfied. First of all, the dimension of $(TM\oplus T^*M)_\pm$ should be equal to $d = dim(TM) = dim(T^*M)$; and secondly, their intersections with the tangent $TM$ and contangent bundle $T^*M$ should be trivial. Equivalently,
  \begin{align}
  (TM\oplus T^*M)_\pm = graph(\pm G_{\bullet\bullet} + B_{\bullet\bullet}) \quad & \Leftrightarrow\quad A^{mn}\ \text{invertible}\\
  (TM\oplus T^*M)_\pm = graph(\pm R^{\bullet\bullet} + S^{\bullet\bullet}) \quad & \Leftrightarrow \quad A_{mn}\ \text{invertible}
  \end{align}
where $G$ and $R$ are symmetric while $B$ and $S$ anti-symmetric. From the sigma model point of view, we are only interested in the first case since this is the one that determines if the action has a second order formulation.

Finally, we claim that, for the bosonic case, the following setting is the correct generalization to study in a unified manner sigma models that mix second and first order formulations. We start with an action already in Hamiltonian form characterized by a generalized metric $A$ of indefinite signature
  \begin{equation}
  S[x,p] = \int \p_t x^m P_m - \frac12 (\p_\s x^m\ P_m)\left(
    \begin{array}{cc}
    A_{mn} & A_m{}^n \\
    A^m{}_n & A^{mn}
    \end{array}
  \right) \left(
    \begin{array}{c}
    \p_\s x^n \\
    P_n
    \end{array}
  \right)
  \end{equation}
Included in the definition of generalized metric is the $O(d,d)$ property (\ref{odd-condition}) that implies that this action is $2d$ conformally invariant. Furthermore, depending on the rank of the matrix $A^{mn}$, the action can be rewritten as a manifestly conformal invariant theory that mixes a first and second order formulation
  \begin{equation}
  S \sim \int \Pi_z^k \Pi_\zb^\ell (\eta_{k\ell} + b_{k\ell} ) + \frac12 \Pi_z^\ell p_{\zb \ell} + \frac12 \Pi_\zb^\ell p_{z \ell}\ ,\quad k,\ell=1,\ldots,rk[A^{mn}]
  \end{equation}
where $A^{mn} = e^m_k \eta^{k\ell} e^n_\ell$ and $\Pi_{z,\zb}^\ell = \p_{z,\zb} x^m e_m^\ell$, $(p_{z,\zb})_\ell=e^m_\ell(P_{t m}\pm P_{\s m})$. See Appendix \ref{app-ortho} for details on the relation of the $O(d,d)$ condition and conformal invariance.

It also happens that the group $O(d,d)$ acts%
  \footnote{
  This action of $O(d,d)$ on the space of generalized metrics is not in general a symmetry of the worldsheet theory.
  }
on the space of generalized metrics. Given $\mathcal O\in O(d,d)$, the generalized metric $A$ transforms as
  \begin{equation}
  A \quad \longrightarrow \quad \mathcal O^t A \mathcal O
  \end{equation}
In particular, it can be shown that the space of positive-definite generalized metrics coincides with the orbit of the unit matrix under this action i.e. $A = \mathcal O^t \mathcal O$. And, for cases where $A$ has indefinite signature, we need to consider other orbits.

To deal with the symmetries of this action in Hamiltonian form, we define a current $(j_z,j_\zb)$ associated to a section $(V^m\ F_m)$ of $TM\oplus T^*M$ following the form of (\ref{current-1}-\ref{current-2})
  \begin{equation}
    \begin{aligned}
    j_z & := (V\ F) \Pc_+ \left(
      \begin{array}{c}
      P \\
      \p_\s x
      \end{array}
    \right) \\
    j_\zb & := (V\ F) \Pc_- \left(
      \begin{array}{c}
      P \\
      \p_\s x
      \end{array}
    \right)
    \end{aligned}
  \end{equation}
where $\Pc_\pm$ are the projectors%
\footnote{
  This projectors are acting on the section $(V\ F)$ from the right. We chose this convention to emulate the one that we will use for the super-case in the next section.
}
onto the ($\pm 1$)-eigenspaces (\ref{bundles}) of $U_A$
  \begin{equation}\label{projectors}
  \Pc_\pm := \frac12( I \pm U_A) = \frac12 \left[ \left(
    \begin{array}{cc}
    \d_m{}^n & 0 \\
    0 & \d^m{}_n
    \end{array}
  \right) \pm \left(
    \begin{array}{cc}
    A_m{}^n & A_{mn} \\
    A^{mn} & A^m{}_n
    \end{array}
  \right) \right]
  \end{equation}
The conservation of $(j_z,j_\zb)$ produce some conditions on the generalized metric $A$ and section $(V,\ F)$ as we will show in a following section.

\subsection{Generalized metrics on supermanifolds}
Let's generalize our proposal to admit fermionic fields. This is, the target space is now a supermanifold $M$ with coordinates $(\ZZ^i)$ $i=1,\ldots,d|1,\ldots,s$ collectively denoting bosonic and fermionic ones. Then we take the action to be
  \begin{equation}\label{action}
  S[\ZZ,\PP] = \int \left\{\p_t \ZZ^i \PP_i - \frac12 (\p_\s \ZZ^i\ \PP_i) \left(
    \begin{array}{cc}
    \AA_{ij} & \AA_i{}^j \\
    \AA^i{}_j & \AA^{ij}
    \end{array}
  \right) \left(
    \begin{array}{c}
    \p_\s \ZZ^j (-)^j \\
    \PP_j
    \end{array}
  \right)\right\}
  \end{equation}
where $(\PP_i)$ is the conjugate momentum to $(\ZZ^i)$ and the matrix $\AA$ is the generalized metric on $TM\oplus T^*M$ characterizing the theory. In this super-case, the condition of $\AA$ being symmetric implies that its supermatrix blocks satisfy
  \begin{equation}
  \AA_{ij} = (-)^{ij} \AA_{ji}\ ,\quad \AA_i{}^j = (-)^{ij + j} \AA^j{}_i\ ,\quad \AA^{ij} = (-)^{ij + i +j} \AA^{ji}
  \end{equation}
while the $O(d,d)$ condition for the bosonic case generalizes to be
  \begin{equation}\label{osp-condition}
  \left(
    \begin{array}{cc}
    \AA_{ik} & \AA_i{}^k \\
    \AA^i{}_k & \AA^{ik}
    \end{array}
  \right) \left(
    \begin{array}{cc}
    0 & \d^k{}_l \\
    \d_k{}^l & 0
    \end{array}
  \right) \left(
    \begin{array}{cc}
    \AA_{lj} & \AA_l{}^j \\
    \AA^l{}_j & \AA^{lj}
    \end{array}
  \right) = \left(
    \begin{array}{cc}
    0 & \d_i{}^j \\
    \d^i{}_j & 0
    \end{array}
  \right)
  \end{equation}
This condition, as in the bosonic case, guarantees that the action (\ref{action}) has conformal invariance although not manifest. However, notice that on supermanifolds there is no notion of signature of a biliner form $\AA$, so the signature condition on $\AA$ gets dropped from the definition of generalized metric. And, to determine if the theory admits or not a second order formulation i.e. if the $(\pm 1)$-eigenspaces of $\UU_{\AA}$ can be written as the graphs of some metric and two-form tensors $(\pm\GG_{ij} + \BB_{ij})$, we need to directly analyze the invertibility of $\AA^{ij}$. 

As an example we can consider
  \begin{equation}
  \AA^{ij} = E_{a,\a}{}^i \left(
    \begin{array}{cc}
    \eta^{ab} & 0 \\
    0 & \eta^{\a\b}
    \end{array}
  \right) E_{b,\b}{}^j
  \end{equation}
where the labels $a,b,\a,\b$ are thought as flat indices, and the blocks $\eta^{ab}$, $\eta^{\a\b}$ are respectively invertible and non-invertible. Then, of course, because of a non-invertible part in $\AA^{ij}$ there is not a second order formulation but using (\ref{osp-condition}) this action could still be rewritten in a manifestly conformal invariant way
  \begin{equation}
  S\sim \int \Pi_z^a (\eta_{ab} + A_a{}^c\eta_{cb})\Pi_\zb^b + \frac12\Pi^\a_z d_{\zb \a} + \frac12\Pi^\a_\zb d_{z \a} + \frac12\Pi^\a_z A_\a{}^b \eta_{ba} \Pi^a_\zb - \frac12\Pi^\a_\zb A_\a{}^b\eta_{ba}\Pi^a_z
  \end{equation}
where $\Pi_{z,\zb}^a = \p_{z,\zb} \ZZ^i E_i{}^a$, $\Pi_{z,\zb}^\a = \p_{z,\zb} \ZZ^i E_i{}^\a$ and $(d_{z,\zb})_\a = E_\a{}^i(\PP_{t i} \pm \PP_{\s i})$. Furthermore, we can also show that when $\AA$ is written as a supermatrix (diagonal bosonic and off-diagonal fermionic blocks), equation (\ref{osp-condition}) becomes equivalent to the defining property of the orthosymplectic supergroup $OSp(d,d|2s)$. See Appendix \ref{app-ortho} for more details.

Just like in the bosonic case, we also define the components of a super current $(j_z, j_\zb)$ as
  \begin{align}
  j_z & :=  (\VV^i\ \FF_i) \left(
    \begin{array}{cc}
    (\Pc_+)_i{}^j & (\Pc_+)_{ij} \\
    (\Pc_+)^{ij} & (\Pc_+)^i{}_j
    \end{array}
  \right) \left(
    \begin{array}{c}
    \PP_j \\
    \p_\s \ZZ^j(-)^j
    \end{array}
  \right) \\
  j_\zb & := (\VV^i\ \FF_i) \left(
    \begin{array}{cc}
    (\Pc_-)_i{}^j & (\Pc_-)_{ij} \\
    (\Pc_-)^{ij} & (\Pc_-)^i{}_j
    \end{array}
  \right) \left(
    \begin{array}{c}
    \PP_j \\
    \p_\s \ZZ^j(-)^j
    \end{array}
  \right)
  \end{align}
where $\Pc_\pm$ are the projectors onto the $(\pm 1)$-eigenspaces of $\UU_\AA$ denoted by $(TM\oplus T^*M)_\pm$ and defined exactly as in (\ref{projectors})
  \begin{equation}
  \Pc_\pm := \frac12 \left[ \left(
    \begin{array}{cc}
    \d_i{}^j & 0 \\
    0 & \d^i{}_j
    \end{array}
  \right) \pm \left(
    \begin{array}{cc}
    \AA_i{}^j & \AA_{ij} \\
    \AA^{ij} & \AA^i{}_j
    \end{array}
  \right) \right]
  \end{equation}
And, again we see that this current is associated to a super vector field $\VV^i$ and super $1$-form $\FF_i$, or equivalently, to a section $(\VV\ \FF)$ of $TM\oplus T^*M$.

\subsection{Holomorphicity}
Now we can proceed to analyze the holomorphicity and anti-holomorphicity conditions on currents. First of all, we notice that when the section $(\VV\ \FF)$ associated to a current belongs to either eigenspace $(TM\oplus T^*M)_\pm$, one of the components of the current becomes zero. This is,
  \begin{equation}\label{hol}
  (\VV,\FF) \in (TM\oplus T^*M)_+ \quad \Leftrightarrow \quad j_z = \VV^i \PP_i + \FF_i \p_\s \ZZ^i(-)^i\ ,\quad j_\zb = 0
  \end{equation}
or
  \begin{equation}\label{antihol}
  (\VV,\FF) \in (TM\oplus T^*M)_- \quad \Leftrightarrow \quad j_z = 0 \ ,\quad j_\zb = \VV^i \PP_i + \FF_i \p_\s \ZZ^i (-)^i
  \end{equation}
The implication of a section satisfying any of these is that the current conservation condition, i.e. $\pb j_z + \p j_\zb = 0$, actually becomes the condition of holomorphicity $\pb j_z = 0$ or antiholomorphicity $\p j_\zb = 0$ for the current.

Finally, to obtain the symmetry conditions (and hence the holomorphiciy conditions), we compute the divergence of the current $(\pb j_z + \p j_\zb)$ using equations of motion
  \begin{align}
  \p_t \ZZ^i & = \frac12 \left[ ((-)^i \AA^i{}_j\ (-)^i \AA^{ij}) \left(
    \begin{array}{c}
    \p_\s \ZZ^j(-)^j \\
    \PP_j
    \end{array}
  \right) +  (\p_\s \ZZ^j\ \PP_j) \left(
    \begin{array}{c}
    \AA_j{}^i \\
    \AA^{ji}
    \end{array}
  \right) \right] \\
  \p_t \PP_i & = -\frac12\left[ (\p_\s \ZZ^k\ \PP_k)(-)^{ki}\left(
    \begin{array}{cc}
    \p_i \AA_{kj} & \p_i \AA_k{}^j \\
    \p_i \AA^k{}_j & \p_i \AA^{kj}
    \end{array}
  \right) \right. \left(
    \begin{array}{c}
    \p_\s \ZZ^j (-)^j \\
    \PP_j
    \end{array}
  \right) \nonumber \\
  & \qquad \left. - \p_\s \left( (\AA_{ij}\ \AA_i{}^j) \left(
    \begin{array}{c}
    \p_\s\ZZ^j(-)^j \\
    \PP_j
    \end{array}
  \right) \right) - \p_\s\left( (\p_\s\ZZ^j\ \PP_j)\left(
    \begin{array}{c}
    \AA_{ji} \\
    \AA^j{}_i
    \end{array}
  \right) \right) \right]
  \end{align}
This computation results in
  \begin{align}
  \p j_\zb + \pb j_z & = -\frac12 (\p_\s \ZZ^i\ \PP_i) \left[(-)^{i\VV} \left(
    \begin{array}{cc}
    \Lc_\VV \AA_{ij} & \Lc_\VV \AA_i{}^j \\
    \Lc_\VV \AA^i{}_j & \Lc_\VV \AA^{ij}
    \end{array}
  \right) \right.\nonumber \\
  & \quad - \left. (-)^{i\FF} \left(
    \begin{array}{cc}
    -(d\FF)_{(i|k} \AA^k{}_{|j)} & -(d\FF)_{ik} \AA^{kj} \\
    (-)^{(i+k)\FF} \AA^{ik}(d\FF)_{kj} & 0
    \end{array}
  \right) \right] \left(
    \begin{array}{c}
    \p_\s \ZZ^j(-)^j \\
    \PP_j
    \end{array}
  \right)
  \end{align}
And, since the parities of $\VV$ and $\FF$ are the same, we can extract the conditions to have a conserved current
  \begin{equation}\label{holomorphicity}
    \begin{aligned}
    (\Lc_\VV \AA)_{ij} & = -[(d\FF)_{ik} \AA^k{}_j + (-)^{ij}(d\FF)_{jk} \AA^k{}_i] \\
    (\Lc_\VV \AA)_i{}^j & = -(d\FF)_{ik}\AA^{kj} \\
    (\Lc_\VV \AA)^{ij} & = 0
    \end{aligned}
  \end{equation}
In terms of the associated $\UU_\AA$, these equation can be written as
  \begin{equation}\label{act-operator}
  \Lc_{(\VV,\ \FF)} \UU := \Lc_\VV \UU + \left[ \left(
    \begin{array}{cc}
    0 & d\FF \\
    0 & 0
    \end{array}
  \right) ,\UU \right] = 0
  \end{equation}
where $[\cdot, \cdot]$ is a commutator. The operation defined in (\ref{act-operator}) can be understood as the Lie derivative of $\UU$ along the section $(\VV,\ \FF)$, and it can be shown that it also behaves as a derivation i.e. $\Lc_{(\VV,\ \FF)} (\UU_1 \UU_2) = (\Lc_{(\VV,\ \FF)}\UU_1) \UU_2 + \UU_1 (\Lc_{(\VV,\ \FF)} \UU_2)$. 

As we already mentioned, this set of equations that determine if a section generates a symmetry becomes the holomorphicity (or antiholomorphicity) conditions when $(\VV\ \FF)$ associated to our current belongs to the subbundle $(TM\oplus T^*M)_+$ or $(TM\oplus T^*M)_-$ i.e. when the section is an eigenvector of $\UU_\AA$.

\subsection{Nilpotency}
In this part we generalize the results of \cite{Alekseev2004,Nekrasov2005} to the case where the target space is a supermanifold. This is, given currents of the form
  \begin{equation}
  j_{(\VV, \FF)} := \VV^i \PP_i + \FF_i \p_\s \ZZ^i (-)^i
  \end{equation}
we can compute their Poisson brackets to obtain
  \begin{align}
  \left\{j_{(\VV_1, \FF_1)}(\s_1), j_{(\VV_2, \FF_2)}(\s_2) \right\}_{PB} & = j_{\lb(\VV_1, \FF_1),(\VV_2,\FF_2)\rb}\d(\s_1-\s_2) \nonumber \\
  & \qquad - \langle(\VV_1,\FF_1),(\VV_2,\FF_2)\rangle\d'(\s_1-\s_2)
  \end{align}
where the brackets $\lb\cdot,\cdot\rb$ and $\langle\cdot,\cdot\rangle$ are the Dorfman bracket and canonical inner product in $TM\oplus T^*M$. Explicitly these brackets are
  \begin{align}
  \lb(\VV_1, \FF_1),(\VV_2,\FF_2)\rb & = \left([\VV_1, \VV_2],\ \Lc_{\VV_1} \FF_2 - (-)^{\VV_2 \FF_1}\i_{\VV_2}d\FF_1\right) \\
  \langle(\VV_1,\FF_1),(\VV_2,\FF_2)\rangle & = \i_{\VV_1} \FF_2 + (-)^{\VV_2 \FF_1} \i_{\VV_2}\FF_1
  \end{align}
We can also see that this Dorfman bracket is complementary to the Lie derivative of an operator defined in (\ref{act-operator}) since it can be verified that
  \begin{equation}\label{distributive}
  \lb (\VV_1,\ \FF_1), \UU(\VV_2,\ \FF_2) \rb = (\Lc_{(\VV_1,\ \FF_1)} \UU)(\VV_2,\ \FF_2 ) + \UU(\lb(\VV_1,\FF_1),(\VV_2,\FF_2)\rb)
  \end{equation}
or more compactly, if we denote $\SS = (\VV,\ \FF)$, $\SS\cdot \UU = \Lc_\SS \UU$ and $\SS_1\cdot\SS_2 = \lb \SS_1,\SS_2\rb$, the previous equation becomes $\SS_1\cdot(\UU(\SS_2)) = (\SS_1\cdot\UU)(\SS_2) + \UU(\SS_1\cdot\SS_2)$. 

The Dorfman bracket give us another way of characterizing symmetries. It was shown in \cite{Severa2017} that for a sigma model $S=\int (G+B)\pb x^m \p x^n$, a section $(V, F)$ generates a symmetry (i.e. $\Lc_V G =0$ and $\Lc_V G = dF$) if and only if $\lb (V, F), \cdot\rb$ preserves the eigenspaces $graph(\pm G + B)$. Using (\ref{distributive}) this result can be generalized to the following. A section $(\VV, \FF)$ generates a symmetry (i.e. $\Lc_{(\VV,\FF)} \UU_\AA = 0$) of the action (\ref{action}) if and only if the operation $\lb(\VV,\ \FF), \cdot\rb$ preserves the eigenspaces $(TM\oplus T^*M)_\pm$.
  \begin{equation}
  \Lc_{(V,F)} \UU = 0 \quad \Leftrightarrow \quad \lb (V,\ F),\ \cdot\ \rb\ \text{preserves}\ (TM\oplus T^*M)_\pm
  \end{equation}

Now we proceed to investigate the nilpotency of a fermionic current $j_{(\QQ, \FF)}$ associated to a section $(\QQ,\FF)$. Taking its Poisson bracket with itself we get
  \begin{equation}
  \{j_{(\QQ,\FF)}, j_{(\QQ,\FF)}\}_{PB} = 0\quad \Longrightarrow \quad j_{\lb(\QQ,\FF),(\QQ,\FF)\rb} = 0
  \end{equation}
since the inner product $\langle (\QQ,\FF),(\QQ,\FF)\rangle$ is trivially zero. This condition implies that the Dorfman bracket needs to be zero $\lb (\QQ,\FF),(\QQ,\FF) \rb = 0$ which in terms of the vector and $1$-form components reads
  \begin{equation}\label{nilpotency}
  [\QQ,\QQ] = 0\ ,\quad \Lc_\QQ \FF + \i_\QQ d\FF = 0
  \end{equation}
Notice that we are requiring the nilpotency of the currents since for fermionic ones, this is equivalent to nilpotency of its charge $Q = \int d\s j_{(\QQ, \FF)}(\s)$.

\section{Pure spinor string}
In this section we will show that the Pure Spinor action in curved background admits a Hamiltonian form, and that its matrix $\AA_{PS}$ satisfies the properties of a generalized metric. Also we will see that $\AA_{PS}$ possesses two eigenvectors that generate the BRST currents of the theory.

The Pure Spinor action in a curved background \cite{Berkovits2001} is written in terms of worldsheet matter fields $(Z^M) = (X^m,\t^\m,\th^\mh)$ and ghost fields $(\l^\a,\lh^\ah)$ and momenta $(\o_\a,\oh_\ah)$. Plus, the ghost variables satisfy the pure spinor constraint $(\l \g^a \l) = (\lh\g^a\lh) = 0$
  \begin{align}
  S_{PS} = \int d^2z &\ \frac12 \pb Z^N \p Z^M(G_{MN} + B_{MN}) + \pb Z^M E^\a_M d_\a + \p Z^M E^\ah_M {\widehat d}_\ah \nonumber \\
  & \quad + \pb Z^M \O_{M\a}{}^\b \l^\a \o_\b + \p Z^M \widehat\O_{M\ah}{}^\bh \lh^\a\oh_\bh \nonumber \\
  & \quad + P^{\a\ah} d_\a {\widehat d}_{\ah} + C_\a^{\b\bh} \l^\a \o_\b {\widehat d}_\bh + {\widehat C}_\ah^{\bh\b} \lh^\ah \oh_\bh {d}_\b \nonumber \\
  & \quad + S_{\a\ah}^{\b\bh} \l^\a \o_\b \lh^\ah \oh_\bh + \o_\a\pb \l^\a + \oh_\ah \p \lh^\ah \label{ps-action}
  \end{align}
The background fields $E_M{}^A$, $G_{MN}$ and $B_{MN}$ are the vielbein, super-metric and two-form, $\O_{M\a}{}^\b$ and $\Oh_{M\ah}{}^\b$ are the spin connections, the fields $C^{\a\ah}_\b$, $\widehat C^{\ah\a}_\bh$ are related to the gravitino and dilatino field strengths while $P^{\a\ah}$ encodes the Ramond-Ramond fields in its gamma matrix expansion and $S^{\a\ah}_{\b\bh}$ is related to the curvature of the manifold.

The other elements of the Pure Spinor formalism are BRST currents given by $j_B = \l^\a d_\a$ and $\widetilde{j}_B = \lh^\ah {\widehat d}_\ah$. To transform it into its Hamiltonian form, we first need to assume that $P^{\a\ah}$ is invertible%
\footnote{
A backgrounds satisfying this requirement is AdS \cite{Berkovits2010}.
}%
, so we can solve for $d_\a$ and ${\widehat d}_\ah$ using their equations of motion. The EOM for $d_\a$ and $\widehat d_\ah$ are
  \begin{align}
  d_\a & = \left( \p Z^M E_M{}^\ah + (\l C\o)^\ah \right) P^{-1}_{\ah\a} \\
  \hat d_\ah & = - P^{-1}_{\ah\a} \left( \pb Z^M E_M{}^\a + (\lh \widehat C \o)^\a\right)
  \end{align}
and we can replace these expressions back into the action and BRST currents. This will result in the field $P^{-1}_{\ah\a}$ appearing everywhere; however, when we use the Legendre transform on the matter fields $(Z^M)$ to compute the Hamiltonian form of the action (i.e. solving for $\p_t Z^M$ in the relation $P_M = \frac{\d L }{\d \p_t Z^M}$) all terms containing $P^{-1}_{\ah\a}$ vanish in the action
  \begin{equation}\label{raw-action}
  \begin{aligned}
  S_{PS} = \int\p_t Z^M P_M + \o_\a & \pb \l^\a + \oh_\ah \p \lh^\ah \\
  - \frac12 \p_\s Z^N \p_\s Z^M \Big( G_{MN} - B_{ML}A^{LP} B_{PN} & - E_{(M|}{}^\ah E_\ah{}^P B_{P|N)} + E_{(M|}{}^\a E_\a{}^P B_{P|N)}\Big) \\
  -\frac12 (P_M - \l\O_M\o - \lh\Oh_M\oh) & A^{MN} (P_N - \l\O_N\o - \lh\Oh_N\oh)  \\
  + (\l C\o^\ah E_\ah{}^M + \lh{\widehat C}\oh^\a E_\a{}^M)(P_M - & \l\O_M\o - \lh\Oh_M\oh) + (\l\lh S)^{\a\ah}\o_\a \oh_\ah \\
  + \p_\s Z^N \Big( B_{NP}A^{PM} + E_N{}^\ah E_\ah{}^M - & E_N{}^\a E_\a{}^M \Big) (P_M - \l\O_M\o - \lh\Oh_M\oh) \\
  + \p_\s Z^N (\l C\o)^\ah E_\ah{}^M B_{MN} + \p_\s Z^N (\lh{\widehat C}\oh)^\a E_\a{}^M & B_{MN} + \p_\s Z^M (\lh\Oh_M\oh) - \p_\s Z^M (\l\O_M\o)  
  \end{aligned}
  \end{equation}
where $A^{MN} = \left( (-)^M E_a{}^M E_b{}^N\eta^{ab} + E_\a{}^M E_\ah{}^N P^{\a\ah} - E_\ah{}^M E_\a{}^N P^{\a\ah}\right)$. The same occurs when the BRST currents are computed, obtaining expressions free from $P^{-1}_{\ah\a}$
  \begin{align}
  j_B & = \l^\a E_\a{}^M P_M - \l^\a E_\a{}^M(\l\O_M\o) - \l^\a E_\a{}^M(\lh\Oh_M\oh) - \p_\s Z^M B_{MN} \l^\a E_\a{}^N \label{brst-z}\\
  {\widetilde j}_B & = \lh^\ah E_\ah{}^M P_M - \lh^\ah E_\ah{}^M(\l\O_M\o) - \lh^\ah E_\ah{}^M(\lh\Oh_M\oh) - \p_\s Z^M B_{MN} \lh^\ah E_\ah{}^N \label{brst-zb}
  \end{align}
Notice that these expressions for the BRST currents are exactly the same as the ones appearing in \cite{Berkovits2001}.

This action (\ref{raw-action}) can be put in the form of (\ref{action}) as we will see in next section, and more importantly we claim that the condition on the invertibility of $P^{\a\ah}$ can be dropped since even the most degenerate case such as Pure Spinor in flat space (where $P^{\a\ah}=0$) is described by (\ref{raw-action}).

\subsection{Generalized metric and BRST sections for the Pure Spinor string}
The Pure Spinor action in Hamiltonian form (\ref{raw-action}) can be written as in (\ref{action}) where there is a generalized metric $\AA$ containing the all background fields that characterize the theory. We consider the target space to be parametrized by
  \begin{equation}
  ({\mathbb Z}^i) = (Z^M, \l^\a, \lh^\ah)
  \end{equation}
matter and ghost fields, while their momentum variables are $({\mathbb P}_i) = (P_M, \o_\a, \oh_\ah)$. Here the bosonic coordinates are $(x^m, \l^\a, \lh^\ah)$ and the fermionic ones $(\t^\m,\th^\mh)$. With these considerations, the generalized metric in the Pure Spinor case $\AA_{PS}$ can be read to be
  \begin{equation}\label{ps-met}
  \AA_{PS} = \left(
    \begin{array}{ccc|ccc}
    \AA_{MN}    & 0          & 0             & \AA_M{}^N   & \AA_M{}^\b   & \AA_M{}^\bh  \\
    0           & 0          & 0             & 0           & \d_\a{}^\b   & 0            \\
    0           & 0          & 0             & 0           & 0            & -\d_\ah{}^\bh \\
    \hline
    \AA^M{}_N   & 0          & 0             & \AA^{MN}    & \AA^{M\b}    & \AA^{M\bh}   \\
    \AA^\a{}_N  & \d^\a{}_\b & 0             & \AA^{\a N}  & \AA^{\a\b}   & \AA^{\a\bh} \\
    \AA^\ah{}_N & 0          & -\d^\ah{}_\bh & \AA^{\ah N} & \AA^{\ah \b} & \AA^{\ah\bh}
    \end{array}
  \right)
  \end{equation}
where the expressions for each component block are left to the Appendix \ref{app}.

Next we would like to have some fermionic sections $(\QQ\ \FF)$ and $(\QQt\ \FFt)$ that reduce to the BRST currents (\ref{brst-z}-\ref{brst-zb}) after showing that they are eigenvectors of $\UU_{(\AA_{PS})}$. By looking at the BRST currents we see that there is a natural guess, and they are given by  $(\QQ^i , \FF_i ) = (Q^M, Q^\a, Q^\ah;\ F_M, F_\a, F_\ah)$ where
  \begin{equation}\label{brst-sect1}
  (\QQ^i) = \left(
    \begin{array}{c}
    Q^M \\
    Q^\a \\
    Q^\ah
    \end{array}
  \right) = \left(
    \begin{array}{c}
    \l^\a E_\a{}^M \\
    - \l^\b E_\b{}^M (\l \O_M)^\a \\
    -\l^\b E_\b{}^M (\lh \Oh_M)^\ah
    \end{array}
  \right)\ ,\quad (\FF_i) = \left(
    \begin{array}{c}
    F_M \\
    F_\a \\
    F_\ah
    \end{array}
  \right) = \left(
    \begin{array}{c}
    Q^N B_{NM} \\
    0 \\
    0
    \end{array}
  \right)
  \end{equation}
and similarly for $({\widetilde \QQ}^i , {\widetilde\FF}_i ) = ({\widetilde Q}^M, {\widetilde Q}^\a, {\widetilde Q}^\ah;\ {\widetilde F}_M, {\widetilde F}_\a, {\widetilde F}_\ah)$, we have
  \begin{equation}\label{brst-sect2}
  (\QQt^i) = \left(
    \begin{array}{c}
    \widetilde Q^M \\
    \widetilde Q^\a \\
    \widetilde Q^\ah
    \end{array}
  \right) = \left(
    \begin{array}{c}
    \lh^\ah E_\ah{}^M \\
    - \lh^\bh E_\bh{}^M (\l \O_M)^\a \\
    -\lh^\bh E_\bh{}^M (\lh \Oh_M)^\ah
    \end{array}
  \right)\ ,\quad (\FFt_i) = \left(
    \begin{array}{c}
    \widetilde F_M \\
    \widetilde F_\a \\
    \widetilde F_\ah
    \end{array}
  \right) = \left(
    \begin{array}{c}
    \widetilde Q^N B_{NM} \\
    0 \\
    0
    \end{array}
  \right)
  \end{equation}

Next we state our main results concerning the description of the pure spinor sigma model in terms of generalized geometry. These results can be proven with a lengthy but straightforward computation
  \begin{itemize}
  \item The generalized metric $\AA_{PS}$ corresponding to the Pure Spinor action satisfies the property (\ref{osp-condition})
    \begin{equation}
    \AA_{PS} \left(
      \begin{array}{cc}
      0 & 1 \\
      1 & 0
      \end{array}
    \right) \AA_{PS} = \left(
      \begin{array}{cc}
      0 & 1 \\
      1 & 0
      \end{array}
    \right)
    \end{equation}
    or equivalently, its associated $\UU_{(\AA_{PS})}$ satisfies $\UU_{(\AA_{PS})}^2 = 1$. This means that the Pure Spinor action $S_{PS}$ in Hamiltonian form still possesses $SO(1,1)$ worldsheet symmetry, and it can be rewritten with manifest conformal invariance which is the original sigma model (\ref{ps-action}). Plus, we also have that when $\AA_{PS}$ is reordered as a super-matrix, it belongs to the orthosymplectic supergroup $OSp(d,d|2s)$.
    
    \item The sections $(\QQ^i\ \FF_i)$ and $(\QQt^i\ \FFt_i)$ that we defined in (\ref{brst-sect1}) and (\ref{brst-sect2}) satisfy the property of being eigenvectors of $\mathbb U_{\AA}$ with eigenvalues $(+1)$ and $(-1)$ respectively. In turn this implies that the components of the currents associated to these sections coincide with the BRST currents
      \begin{align}
      (\QQ\ \FF) & \quad \longrightarrow \quad j_z = j_B\ ,\quad j_\zb = 0 \\
      (\QQt\ \FFt) & \quad \longrightarrow \quad j_z = 0\ ,\quad j_\zb = \widetilde j_B
      \end{align}
    Then, as was discussed, the symmetry conditions for each case become holomorphicity and antiholomorphicity conditions for the BRST currents. These equations will ultimately imply the Type II SUGRA constraints (See section \ref{sect-sugra}).
    
    \item The target space of the pure spinor sigma model is a graded supermanifold (grading given by ghost number). This fact implies that there cannot exist tensors $(\GG_{ij})_{i,j=M,\a}$ and $(\BB_{ij})_{i,j=M,\a}$ such that $(TM\oplus T^*M)_\pm$ are solved to be the graph of $(\pm\GG+\BB)$. In other words, the pure spinor sigma model does not admit a second order formulation (See section \ref{grading} for details).
  \end{itemize}

\subsection{Type II SUGRA constraints}\label{sect-sugra}
We now show that for the Pure Spinor case the conditions of nilpotency (\ref{nilpotency}) and holomorphicity (\ref{holomorphicity}) that we found actually imply the Type II SUGRA constraints which of course agrees with the results of \cite{Berkovits2001}.

The constraints of nilpotency (\ref{nilpotency}) in the Pure Spinor case can be simplified further. To avoid a cumbersome computation let's define a $2$-form field
  \begin{equation}
  [\BB_{ij}] := \left(
    \begin{array}{ccc}
    B_{MN} & 0 & 0 \\
    0 & 0 & 0 \\
    0 & 0 & 0
    \end{array}
  \right)
  \end{equation}
with field strength
  \begin{equation}
  {\mathbb H}_{ijk} = \left\{
    \begin{array}{ll}
    H_{MNP} := (dB)_{MNP} & ; i=M,\ j=N,\ k=P \\
    0 & ;\ \text{any other case}
    \end{array}
  \right.
  \end{equation}
This means that we can write the $1$-form simply as $\FF = \i_\QQ \BB$, and then the second equation in (\ref{nilpotency}) becomes
  \begin{align}
  \Lc_\QQ (\i_\QQ\BB) + \i_\QQ d(\i_\QQ\BB) & = \i_{[\QQ,\QQ]} \BB - (\i_\QQ \Lc_\QQ\BB) + \i_\QQ d(\i_\QQ\BB) \nonumber \\
  & = \i_{[\QQ,\QQ]}\BB - \i_\QQ \i_\QQ d\BB
  \end{align}
where in the last line we used Cartan's formula $\Lc_\QQ = \i_\QQ d + d \i_\QQ$. Thus, together with the Lie bracket $[\QQ, \QQ] = 0$, nilpotency of the BRST current $j_B$ implies the following constraints
  \begin{equation}\label{nil1}
  [\QQ,\QQ] = 0 \ ,\quad \i_\QQ\i_\QQ \HH = 0
  \end{equation}
In a completely analogous way we obtain the rest of constraints for the relations $\{{\widetilde j}_B, {\widetilde j}_B\} = 0$ and $\{j_B, {\widetilde j}_B\} = 0$
  \begin{align}
  [\QQt,\QQt] = 0\ ,\quad \i_\QQt\i_\QQt \HH = 0 \label{nil2} \\
  [\QQ, \QQt] = 0\ ,\quad \i_\QQ\i_\QQt \HH = 0 \label{nil3}
  \end{align}
Then, by actually computing the Lie brackets we arrive at
  \begin{align}
  \l^\a\l^\b T_{\a\b}{}^C = \l^\a\l^\b\l^\g R_{\a\b\g}{}^\s = \l^\a\l^\b \widehat R_{\a\b\gh}{}^\sh = 0 & \ ,\quad \l^\a\l^\b H_{\a\b C} = 0 \label{sugra-nil1} \\
  \lh^\ah\lh^\bh T_{\ah\bh}{}^C = \lh^\ah\lh^\bh R_{\ah\bh\g}{}^\s = \lh^\ah\lh^\bh\lh^\gh \widehat R_{\ah\bh\gh}{}^\sh = 0 & \ ,\quad \lh^\ah\lh^\bh H_{\ah\bh C} = 0 \label{sugra-nil2}\\
  T_{\a\bh}{}^C = \l^\a\l^\g R_{\a\bh\g}{}^\s = \lh^\ah\lh^\gh \widehat R_{\ah\b\gh}{}^\sh = 0 & \ ,\quad H_{\a\bh C} = 0 \label{sugra-nil3}
  \end{align}

Finally, using (\ref{holomorphicity}) in the Pure Spinor case, we have that the holomorphicity conditions $\pb j_B = 0$ become
  \begin{equation}\label{hol1}
  (\Lc_\QQ \AA)_{ij} + (d\i_\QQ \BB)_{(i|k} \AA^k{}_{|j)} = 0\ ,\quad (\Lc_\QQ \AA)_i{}^j + (d\i_\QQ \BB)_{ik} \AA^{kj} = 0 \ ,\quad (\Lc_\QQ \AA)^{ij} = 0
  \end{equation}
and the antiholomorphicity ones $\p {\widetilde j}_B = 0$ become
  \begin{equation}\label{hol2}
  (\Lc_\QQt \AA)_{ij} + (d\i_\QQt \BB)_{(i|k} \AA^k{}_{|j)} = 0\ ,\quad (\Lc_\QQt \AA)_i{}^j + (d\i_\QQt \BB)_{ik} \AA^{kj} = 0 \ ,\quad (\Lc_\QQt \AA)^{ij} = 0
  \end{equation}
Using the explicit form of the generalized metric and sections, we can compute these equations. Schematically, the Lie derivative $\Lc_{(\QQ,\FF)}$ will act on the vielbein $E_A{}^M$, spin connection $\O_{M\a}{}^\b$ and two-form $B_{MN}$ producing the torsion $\l^\g T_{\g A}{}^M$, curvature tensor $\l^\g R_{\g M\a}{}^\b$ and the three-form field strength $\l^\g H_{\g M N}$. For the first set of equations (\ref{hol1}) we get
  \begin{equation}\label{sugra-hol1}
    \begin{aligned}
    & T_{\a(bc)} = H_{\a bc} = H_{\a \bh \g} = T_{\a\b c} + H_{\a\b c} = T_{\a\bh c} - H_{\a\bh c} = 0  \\
    & T_{\a b}{}^\g - T_{\a\gh b}P^{\g\gh} = T_{\a b}{}^\gh + T_{\a\g b}P^{\g\gh} = T_{\a\b}{}^\gh + \frac12 H_{\a\b\g}P^{\g\gh} = T_{\a\bh}{}^\g = 0 \\
    & \N_\a P^{\b\bh} + T_{\a\rho}{}^\b P^{\rho\bh} - C^{\b \bh}_\a = \widehat R_{a\b \gh}{}^\sh - T_{\b\rho a}\widehat C^{\sh\rho}_\gh = \widehat R_{\a\b\gh}{}^\sh + \frac12 H_{\a\b\bullet} \widehat C^{\sh \bullet}_\gh = 0 \\
    & - S_{\a\ah}^{\b\bh} + \widehat R_{\a\gh\ah}{}^\bh P^{\b\gh} + \N_\a \widehat C^{\bh\b}_\ah + T_{\a\g}{}^\b \widehat C^{\bh\g}_\ah = 0 \\
    & \l^\a \l^\b \left( R_{a\a\b}{}^\g - T_{\a\bh a} C^{\g \bh}_\b \right) = \l^\a\l^\b R_{\bh \a\b}{}^\g = 0 \\
    & \l^\a \l^\b \left( \N_\a C_\b^{\g\gh} - R_{\a\s\b}{}^\g P^{\s\gh}\right) = \l^\a\l^\b \left( \N_\a S_{\b\bh}^{\g\gh} - \widehat R_{\a\bullet\bh}{}^\gh C_\b^{\g\bullet} - R_{\a\bullet \b}{}^{\g} \widehat C^{\gh\bullet}_\bh \right) = 0
    \end{aligned}
  \end{equation}
and for the other (\ref{hol2})
  \begin{equation}\label{sugra-hol2}
    \begin{aligned}
    & T_{\ah(bc)} = H_{\ah bc} = H_{\ah \b \gh} = T_{\ah\bh c} - H_{\ah\bh c} = T_{\ah\b c} + H_{\ah\b c} = 0  \\
    & T_{\ah b}{}^\gh - T_{\ah\g b}P^{\g\gh} = T_{\ah b}{}^\g + T_{\ah\gh b}P^{\g\gh} = T_{\ah\bh}{}^\g + \frac12 H_{\ah\bh\gh}P^{\g\gh} = T_{\ah\b}{}^\gh = 0 \\
    & \N_\ah P^{\b\bh} + T_{\ah\widehat\rho}{}^\bh P^{\b\widehat\rho} - \widehat C^{\bh \b}_\ah = R_{a\bh \g}{}^\s - T_{\bh\widehat\rho a} C^{\s\widehat\rho}_\g = R_{\ah\bh\g}{}^\s + \frac12 H_{\ah\bh\bullet} C^{\s \bullet}_\g = 0 \\
    & - S_{\a\ah}^{\b\bh} + R_{\ah\g\a}{}^\b P^{\g\bh} + \N_\ah C^{\b\bh}_\a + T_{\ah\gh}{}^\bh C^{\b\gh}_\a = 0 \\
    & \lh^\ah \lh^\bh \left( \widehat R_{a\ah\bh}{}^\gh - T_{\ah\b a} \widehat C^{\b\gh}_\bh \right) = \lh^\ah\lh^\bh \widehat R_{\b \ah\bh}{}^\gh = 0 \\
    & \lh^\ah \lh^\bh \left( \N_\ah \widehat C_\bh^{\gh\g} - \widehat R_{\ah\sh\bh}{}^\gh P^{\g\sh}\right) = \lh^\ah\lh^\bh \left( \N_\ah S_{\b\bh}^{\g\gh} - R_{\ah\bullet\b}{}^\g \widehat C_\bh^{\gh\bullet} - \widehat R_{\ah\bullet \bh}{}^{\gh} C^{\g\bullet}_\b \right) = 0
    \end{aligned}
  \end{equation}
These set of equations are the same%
\footnote{
The difference in relative signs come from our use of different conventions for the contraction of vectors and tensors.
}
 as in \cite{Berkovits2001}, and it was proven there that they imply the correct Type II SUGRA constraints.

\subsection{Flat space}\label{flat}
So far we showed how it all works in the most general setting but it may help to do some explicit computations in the simplest case.

The flat space background corresponds to setting the vielbein fields to
  \begin{equation}
  [E_M{}^A] = \left(
    \begin{array}{ccc}
    \d_m^a & 0 & 0 \\
    -\frac12 (\g^a\t)_\m & \d_\m^\a & 0 \\
    -\frac12 (\g^a\th)_\mh & 0 & \d_\mh^\ah
    \end{array}
  \right)\ ,\quad [E_A{}^M] = \left(
    \begin{array}{ccc}
    \d_a^m & 0 & 0 \\
    \frac12 (\g^m\t)_\a & \d_\a^\m & 0 \\
    \frac12 (\g^m\th)_\ah & 0 & \d_\ah^\mh
    \end{array}
  \right)
  \end{equation}
the $B$-field to
  \begin{equation}
  [B_{AB}] = -\frac12 \left(
    \begin{array}{ccc}
    0 & -(\g_a\t)_\b & (\g_a\th)_\bh \\
    (\g_b\t)_\a & 0 & \frac12 (\g^a\t)_\a(\g_a\th)_\bh \\
    -(\g_b\th)_\ah & -\frac12 (\g^a\th)_\ah(\g_a\t)_\b & 0
    \end{array}
  \right)
  \end{equation}
and the rest of background fields $\O_{M\a}{}^\b,\Oh_{M\ah}{}^\bh, P^{\a\ah}, C_\a^{\b\bh} , \widehat C_\ah^{\bh \b}$ to zero. Then the block matrix $\AA$ simplifies greatly. In $[\AA^{ij}]$ and $[\AA_{ij}]$ the only non-zero blocks are $\AA^{MN}$ and $\AA_{MN}$ respectively, where
  \begin{equation}
  [\AA^{MN}] = \left(
    \begin{array}{ccc}
    \eta^{mn} & 0 & 0 \\
    0 & 0 & 0 \\
    0 & 0 & 0
    \end{array}
  \right)\ , \qquad [\AA_{MN}] = \left(
    \begin{array}{ccc}
    \eta_{mn} & 0 & 0 \\
    0 & 0 & 0 \\
    0 & 0 & 0
    \end{array}
  \right)
  \end{equation}
Likewise, $[\AA_i{}^j]$ reduces to
  \begin{equation}
  [\AA_i{}^j] = \left(
    \begin{array}{ccc}
    \AA_M{}^N & 0 & 0 \\
    0 & \d_\a{}^\b & 0 \\
    0 & 0 & -\d_\ah{}^\bh
    \end{array}
  \right)\quad\text{where}\quad [\AA_M{}^N] = \left(
    \begin{array}{ccc}
    0 & 0 & 0 \\
    0 & \d_\m{}^\n & 0 \\
    0 & 0 & -\d_\mh{}^\nh
    \end{array}
  \right)
  \end{equation}
This implies that the action takes the following form
  \begin{align}
  S & = \int \p_t x^m P_m + \p_t \t^\a p_\a + \p_t \th^\ah \widehat p_\ah + \p_t \l^\a \o_\a + \p_t\lh^\ah \oh_\ah  \nonumber \\
  & \qquad - \left( \frac12 P_m \eta^{mn} P_n + \frac12 \p_\s x^n\p_\s x^m \eta_{mn} - (\p_\s\th^\ah \widehat p_\ah - \p_\s\t^\a p_\a) - (\p_\s\lh^\ah \oh_\ah - \p_\s\l^\a \o_\a)\right) \nonumber \\
  & = \int (\p_t x^m P_m -\frac12 P_m P^m -\frac12 \p_\s x^n\p_\s x^m\eta_{mn}) + \pb\t^\a p_\a + \p\th^\ah \widehat p_\ah + \pb\l^\a \o_\a + \p\lh^\ah \oh_\ah \label{flat-action}
  \end{align}
and after solving for $P_m$ using its equations of motion we end up with the Pure Spinor string in flat space
  \begin{equation}
  S = \int \frac12 \pb x^m \p x_m + \pb\t^\a p_\a + \p\th^\ah \widehat p_\ah + \pb\l^\a \o_\a + \p\lh^\ah \oh_\ah
  \end{equation}

The eigenspaces $(TM\oplus T^*M)_\pm$ in flat space can be easily computed since the operator $\UU_{\AA_{flat}}$ is simply
  \begin{equation}
  \UU_{\AA_{flat}} = \left(
    \begin{array}{ccccc|ccccc}
    0 & & & & & \eta_{mn} & & & &\\
    & \d^\m_\n & & & & &  0 & & & \\
    & & -\d^\mh_\nh & & & & & 0 & & \\
    & & & \d^\a_\b & & & & & 0 & \\
    & & & & -\d^\ah_\bh & & & & & 0 \\
    \hline
    \eta^{mn} & & & & & 0 & & & & \\
    & 0 & & & & &  \d^\m_\n & & & \\
    & & 0 & & & & & -\d^\mh_\nh & & \\
    & & & 0 & & & & & \d^\a_\b & \\
    & & & & 0 & & & & & -\d^\ah_\bh 
    \end{array}
  \right)
  \end{equation}
Then, the $(+1)$-eigenvectors are
  \begin{equation}
  (\VV, \FF) = \left( \VV^m \frac{\p}{\p x^m} + \VV^\m \frac{\p}{\p\t^\m} + \VV^\a \frac{\p}{\p\l^\a},\ \VV^n \eta_{nm} \p_\s x^m + \FF_\m \p_\s \t^\m + \FF_\a \p_\s \l^\a \right)
  \end{equation}
and the $(-1)$-eigenvectors are
  \begin{equation}
  (\widetilde\VV, \FFt) = \left( \widetilde\VV^m \frac{\p}{\p x^m} + \widetilde\VV^\mh \frac{\p}{\p\th^\mh} + \widetilde\VV^\ah \frac{\p}{\p\lh^\ah},\ -\widetilde\VV^n \eta_{nm} \p_\s x^m + \FFt_\mh \p_\s \th^\mh + \FFt_\ah \p_\s \lh^\ah \right)
  \end{equation}
for $\VV^{m,\m,\a}$, $\widetilde\VV^{m,\mh,\ah}$, $\FF_{\m,\a}$, $\FFt_{\mh,\ah}$ functions of $(x^m, \t^\m, \th^\mh,\l^\a, \lh^\ah)$. Thus, we can see both eigen\-spaces have the same dimension. Furthermore, it becomes clear that these eigen\-spaces $(TM\oplus T^*M)_\pm$ have non-trivial intersections with $TM$ and $T^*M$
  \begin{equation}
  \begin{aligned}
  (TM\oplus T^*M)_+ \cap TM & = \left\{ \VV^\m \frac{\p}{\p\t^\m} + \VV^\a \frac{\p}{\p\l^\a} \right\} \\
  (TM\oplus T^*M)_+ \cap T^*M & = \left\{ \FF_\m \p_\s \t^\m + \FF_\a \p_\s \l^\a \right\} \\
  (TM\oplus T^*M)_- \cap TM & = \left\{ \widetilde\VV^\mh \frac{\p}{\p\th^\mh} + \widetilde\VV^\ah \frac{\p}{\p\lh^\ah} \right\} \\
  (TM\oplus T^*M)_- \cap T^*M & = \left\{ \FFt_\mh \p_\s \th^\mh + \FFt_\ah \p_\s \lh^\ah \right\}
  \end{aligned}
  \end{equation}
In the generic case, these intersections reduce but never become trivial. We will see in the next section that because of the ghost number grading on the target space, the intersections between $(TM\oplus T^*M)_\pm$ and $T^*M$ may be lifted to be zero but the directions $\frac{\p}{\p \l^\a}$ and $\frac{\p}{\p\lh^\ah}$ will always be in $(TM\oplus T^*M)_+ \cap TM$ and $(TM\oplus T^*M)_-\cap TM$ respectively.

Special cases of sections belonging to these eigenspaces are the ones that generate the BRST currents that for flat space become
  \begin{align}
  j_B & = \frac12(\l\g^m\t) P_m + \l^\a p_\a + \frac12 (\l\g_m\t)\p_\s x^m + \frac14 (\l\g^m\t)(\g_m\t)_\m \p_\s \t^\m \\
  {\widetilde j}_B & = \frac12(\lh\g^m\th) P_m + \lh^\ah {\widehat p}_\ah + \frac12 (\lh\g_m\th)\p_\s x^m + \frac14 (\lh\g^m\th)(\g_m\th)_\mh \p_\s \th^\mh
  \end{align}
and after solving for $P_m$ in (\ref{flat-action}) will take their usual form in flat space
  \begin{align}
  j_B & = \l^\a \left( p_\a  + \frac12 (\g_m\t)_\a \p x^m + \frac18 (\g^m\t)_\a(\g_m\t)_\m \p \t^\m \right) \\
  j_B & = \lh^\ah \left( {\widehat p}_\ah  + \frac12 (\g_m\th)_\ah \pb x^m + \frac18 (\g^m\th)_\ah(\g_m\th)_\mh \pb \th^\mh \right)
  \end{align}

Since the flat case is the simplest solution to Type II supergravity equations, we can verify that the conditions of holomorphicity and nilpotency are satisfied. We take as an example the holomorphic case, and compute the Lie derivatives w.r.t. $\QQ$. In the case of $[\Lc_\QQ \AA_{ij}]$ the only non-zero components are
  \begin{equation}
  [(\Lc_\QQ\AA)_{MN}] = \left(
    \begin{array}{ccc}
    0 & -\frac12 (\l\g_m)_\n & 0 \\
    -\frac12 (\l\g_n)_\m & 0 & 0 \\
    0 & 0 & 0
    \end{array}
  \right)\ , \quad [(\Lc_\QQ \AA)_{\a N}] = \left( \frac12 (\g_n\t)_\a,\ 0,\ 0 \right)
  \end{equation}
and $(\Lc_\QQ\AA)_{M\b} = (\Lc_\QQ\AA)_{\b M}$. For the matrix $[(\Lc_\QQ\AA)_i{}^j]$, the only non-zero components are
  \begin{equation}
  [(\Lc_\QQ\AA)_M{}^N] = \left(
    \begin{array}{ccc}
    0 & 0 & 0 \\
    \frac12(\l\g^n)_\m & 0 & 0 \\
    0 & 0 & 0
    \end{array}
  \right)\ ,\quad [(\Lc_\QQ\AA)_\a{}^N] = \left( -\frac12(\g^n\t)_\a,\ 0,\ 0 \right)
  \end{equation}
And, for $[(\Lc_\QQ\AA)^{ij}]$ we encounter that all of its components vanish. Furthermore, to verify the holomorphicity equations we still to compute $[(d\FF)_{ij}]$ which has non-zero components
  \begin{align}
  [(d\FF)_{MN}] & = \left(
    \begin{array}{ccc}
    0 & \frac12 (\l\g_m)_\n & 0 \\
    -\frac12 (\l\g_n)_\m & \frac14 (\l\g_m)_{(\m}(\g^m\t)_{\n)} -\frac12 (\l\g_m\t)\g^m_{\m\n} & 0 \\
    0 & 0 & 0
    \end{array}
  \right) \\
  [(d\FF)_{\a N}] & = \left(\frac12 (\g_n\t)_\a,\ -\frac14(\g^m\t)_\a(\g_m\t)_\n,\ 0\right)
  \end{align}
and $(d\FF)_{M\b} = - (d\FF)_{\b M}$. With all this, it is simple to verify that the holomorphicity equations (\ref{hol1}) are satisfied in the flat space case i.e. $\pb j_B = 0$ as expected.

Finally, to verify the nilpotency conditions (\ref{nil1}), we compute the Lie bracket $[\QQ,\QQ]$ where the only non-zero component turns out to be
  \begin{equation}\label{eq1}
  ([\QQ,\QQ])^m = \l\g^m \l
  \end{equation}
Meanwhile, the field strength $H_{MNP}$ in flat space is 
  \begin{equation}
    \begin{aligned}
    H_{m\n\rho} = (\g_m)_{\n\rho}\ &,\quad H_{m\nh{\widehat\rho}} = -(\g_m)_{\nh\widehat\rho} \ , \\
    H_{\m\n\widehat\rho} = -\frac12 (\g_a)_{\m\n}(\g^a\th)_{\widehat\rho}\ &,\quad H_{\mh\nh\rho} = \frac12 (\g_a)_{\mh\nh}(\g^a\t)_\rho
    \end{aligned}  
  \end{equation}
and zero for any other mix of indices. This implies that the only non-zero components of $(\i_\QQ\i_\QQ \HH)$ are $(\i_Q\i_Q H)_M = (-)^P Q^P Q^N H_{NPM}$ that explicitly become
  \begin{equation}\label{eq2}
    \begin{aligned}
    (\i_Q \i_Q H)_m & = - (\l\g_m\l) \\
    (\i_Q \i_Q H)_\m & = -(\l\g^m\t)(\l\g_m)_\m \\
    (\i_Q \i_Q H)_\mh & = \frac12 (\l\g_m\l)(\g^m\th)_\mh
    \end{aligned}
  \end{equation}
Then the values in both equations (\ref{eq1}) and (\ref{eq2}) vanish because of the pure spinor constraint $(\l\g^m\l) = 0$ and the Fierz identity $(\g_m)_{\s(\m}(\g^m)_{\n\rho)} = 0$. Thus, we have verified the nilpotency of the BRST current $j_B$. In an analogous way the antiholomorphicity and nilpotency of ${\widetilde j}_B$ in flat case can be proven.

\subsection{Grading of the target space}\label{grading}

We will consider that the target space for the pure spinor sigma model is a graded manifold \cite{Voronov2001}. The grading is given by the bosonic coordinates $(\l^\a, \lh^\ah)$ and is conventionally called ghost number.

We also require that the generalized metric $\AA$ has grading $(0,0)$ or in other words of ghost number zero
  \begin{equation}
  \AA = (d\ZZ^j\otimes d\ZZ^i)\AA_{ij} + (\frac{\p}{\p\ZZ^j}\otimes d\ZZ^i) \AA_i{}^j + (d\ZZ^j\otimes \frac{\p}{\p\ZZ^i}) \AA^i{}_j + (\frac{\p}{\p\ZZ^j}\otimes\frac{\p}{\p\ZZ^i})\AA^{ij}
  \end{equation}
Thus, all terms proportional to objects such as $d\l^\a\otimes d\l^\b$, $dZ^M\otimes d\l^\a$, $d\l^\a\otimes \frac{\p}{\p Z^M}$,$d\l^\a\otimes \frac{\p}{\p\lh^\bh}$ need to vanish since their coefficient functions are not allowed to have negative ghost number i.e. no inverse powers of $\l^\a$, $\lh^\ah$. Also, remember that all coefficients such as $\AA_{ij}$, $\AA^{ij}$, $\AA_i{}^j$ are functions on the base manifold $M$, in other words, they only depend on variables $(Z^M, \l^\a, \lh^\ah)$ and not on the momentum variables $(P_M, \o_\a,\oh_\ah)$ which may account for negative ghost number. Then the following coefficients (entries in the generalized metric) are zero
  \begin{equation}\label{zeroes}
  \begin{aligned}
  \AA_{M\a} = \AA_{M\ah} = \AA_{\a\b} = \AA_{\a\bh} = \AA_{\ah\bh} = 0 \\
  \AA_\a{}^M = \AA_\a{}^\ah = \AA_\ah{}^M = \AA_\ah{}^\a = 0
  \end{aligned}
  \end{equation}
as well as their transposes. This assertion can be readily verified in the expression for the generalized metric of the Pure Spinor string (\ref{ps-met}).

As a consequence of the ghost number grading, we have (\ref{zeroes}) which together with the fact that for the Pure Spinor case $\AA^\a{}_\b = \d^\a_\b$ and $\AA^\ah{}_\bh = -\d^\ah_\bh$ imply that the vector fields
  \begin{equation}
  \VV = \VV^\a(Z,\l,\lh) \frac{\p}{\p \l^\a}\ ,\quad \widetilde\VV = \widetilde\VV^\ah(Z,\l,\lh) \frac{\p}{\p\lh^\ah}
  \end{equation}
are respectively $(+1)$ and $(-1)$ eigenvectors of $\UU_{(\AA_{PS})}$ for any curved background. This means the eigenspaces $(TM\oplus T^*M)_\pm$ have a non-trivial intersection with the tangent bundle $TM$. Equivalently we could say that the $[\AA_{ij}]$ part of the generalized metric $\AA_{PS}$ is not invertible.

On the other hand, if we were to try to compute the intersections of $(\pm 1)$-eigenspaces of $\UU_{\AA_{PS}}$ with $T^*M$ just as we did for the flat case, we would see that there exist a set of equations that may not have solutions in the generic case, meaning that the intersections $(TM\oplus T^*M)_\pm \cap T^*M$ could lifted to be zero. Then, it is better to study directly the invertibility of the matrix $[\AA^{ij}]$. In the pure spinor case it can be written as $[\AA^{ij}] = [\EE_{\underline i}{}^i]^T [\AA^{\underline i \underline j}] [\EE_{\underline j}{}^j]$ where the underlined labels can be thought as flat indices and $\EE_{\underline i}{}^j$ as vielbein. Thus, $[\AA^{ij}]$ is congruent to $[\AA^{\underline i \underline j}]$
  \begin{equation}\label{congruent}
  [\AA^{ij}] \quad \sim \quad [\AA^{\underline i \underline j}] = \left(
    \begin{array}{c|cccc}
    \eta^{ab} & 0 & 0 & 0 & 0 \\
    \hline
    0 & 0 & P^{\a\bh} & 0 & (\lh \widehat C)^{\bh\a} \\
    0 & - P^{\b\ah} & 0 & (\l C)^{\b\ah} & 0 \\
    0 & 0 & (\l C)^{\a\bh} & 0 & (\l\lh S)^{\a\bh} \\
    0 & (\lh\widehat C)^{\ah \b} & 0 & (\l\lh S)^{\b\ah} & 0
    \end{array}
  \right)
  \end{equation}
and the vielbein $\EE_{\underline i}{}^j$ is
  \begin{equation}
  [\EE_{\underline i}{}^j] = \left(
    \begin{array}{ccc}
    E_A{}^N & E_A{}^N(\l\O_N)^\b & E_A{}^N(\lh\Oh_N)^\bh \\
    0 & \d_\a{}^\b & 0 \\
    0 & 0 & \d_\ah{}^\bh
    \end{array}
  \right)
  \end{equation}
We can now see that the second diagonal block in (\ref{congruent}) is not invertible since its inverse would need to have entries with inverse powers of $\l^\a$ and $\lh^\ah$ which is forbidden by construction on graded manifolds. This means there is not possible to have a second order formulation for the pure spinor sigma model in terms on some tensor $\GG_{ij}$ and $\BB_{ij}$.

\section{Summary and future directions}
In order to obtain a formulation of the Pure Spinor string that treats symmetrically matter and ghost fields, we have studied sigma model actions in their Hamiltonian form. We also showed that by extending the definition of a generalized metric we can treat in a unified manner theories that mix first and second order actions. Then, we verified that the Pure Spinor sigma model action (\ref{ps-action}) can be expressed in Hamiltonian form for any curved background, and the appearing matrix $\AA_{PS}$ satisfies the conditions of a generalized metric. Furthermore, there exist fermionic sections $(\QQ,\ \FF)$ and $(\QQt,\ \FFt)$ that are eigenvectors of $\AA_{PS}$ and generate the BRST currents $j_B$ and $\widetilde j_B$.

Finally, we were able to describe the conditions of holomorphicity and nilpotency of currents in terms of constraints on the associated sections and the generalized metric (\ref{holomorphicity}, \ref{nilpotency}). And, ultimately we used this in the Pure Spinor case to deduce that the background fields $(E_\a{}^M,E_\ah{}^M,\O_{M\a}{}^\b, \Oh_{M\ah}{}^\bh, C^{\a\ah}_\b, {\widehat C}^{\ah\a}_\bh, S^{\a\ah}_{\b\bh})$ satistfy the Type II supergravity constraints (\ref{sugra-nil1}-\ref{sugra-nil3}, \ref{sugra-hol1}-\ref{sugra-hol2}).

A future direction is to take the Pure Spinor sigma model and study the simpler case where the "matter" part of the target space is a complex supermanifold. With the additional supposition that the 2-form $B_{MN}$ is the Kahler form, we are able to consistently associate holomorphic variables in the worldsheet and target space. Thus, we land on a theory that is an honest first order formulation much like in \cite{Losev2005}
  \begin{equation}
  S = \int p_M \pb z^M + \bar p_{\bar M} \p \bar z^{\bar M} + \o_\a\pb\l^\a + \oh_\ah \p\lh^\ah -\frac12 (p_M\ \o_\a) \left(
    \begin{array}{cc}
    A^{M\bar M} & A^{M\ah} \\
    A^{\a \bar M} & A^{\a\ah}
    \end{array}
  \right) \left(
    \begin{array}{c}
    \bar p_{\bar M} \\
    \oh_\ah
    \end{array}
  \right) \nonumber
  \end{equation}
where the matrix $A$ contains all background fields. And, the $BRST$ currents simplify to be generated only by vector fields
  \begin{equation}
    \begin{aligned}
    \QQ_B & = \l^\a E_\a{}^M \frac{\p}{\p z^M} - \l^\a E_\a{}^M \l^\b \O_{M \b}{}^\g \frac{\p}{\p \l^\g} \\
    \QQt_B & = \lh^\ah E_\ah{}^{\bar M} \frac{\p}{\p \zb^{\bar M}} - \lh^\ah E_\ah{}^{\bar M} \lh^\bh \Oh_{{\bar M} \bh}{}^\gh \frac{\p}{\p \lh^\gh}
    \end{aligned}
  \end{equation}
A particular example of this is the Pure Spinor string in a Calabi-Yau background.

\vspace{1cm}

\noindent {\bf Acknowledgements.}
The author would like to thank Andrei Mikhailov for suggesting the problem and useful discussions. This project was financially supported by FAPESP grant 2016/22579-9.

\newpage

\appendix

\section{Block entries of the $\AA_{PS}$ matrix}\label{app}
The matrix $\AA_{PS}$ for the Pure Spinor action depends on the background fields in the following way. The matrix $[\AA^{ij}]$ is
  \begin{equation}
  [\AA^{ij}] = \left(
    \begin{array}{ccc}
    \AA^{MN} & \AA^{M\b} & \AA^{M\bh} \\
    \AA^{\a N} & \AA^{\a\b} & \AA^{\a\bh} \\
    \AA^{\ah N} & \AA^{\ah\b} & \AA^{\ah\bh}
    \end{array}
  \right)
  \end{equation}
where the blocks are
  \begin{align}
  \AA^{MN} & = A^{MN} := (-)^M E_a{}^M E_b{}^N \eta^{ab} + E_\a{}^M E_\ah{}^N P^{\a\ah} - E_\ah{}^M E_\a{}^N P^{\a\ah} \\
  \AA^{M\b} & = -A^{MN}(\l\O_N)^\b - (-)^M (\l C)^{\b\bh} E_\bh{}^M \\
  \AA^{M\bh} & = -A^{MN}(\lh\Oh_N)^\bh - (-)^M (\lh {\widehat C})^{\bh\b} E_\b{}^M \\
  \AA^{\a\b} & = (\l\O_M)^\a A^{MN} (\l\O_N)^\b + (\l C)^{\a\ah} E_\ah{}^M(\l\O_M)^\b + (\l C)^{\b\bh} E_\bh{}^M (\l\O_M)^\a \\
  \AA^{\a\bh} & = (\l\O_M)^\a A^{MN} (\lh\Oh_N)^\bh + (\l C)^{\a\ah} E_\ah{}^M(\lh\Oh_M)^\bh + (\lh{\widehat C})^{\bh\b} E_\b{}^M(\l\O_M)^\a \nonumber \\
  & \quad  -(\l\lh S)^{\a\bh} \\
  \AA^{\ah\bh} & = (\lh\Oh_M)^\a A^{MN} (\lh\Oh_N)^\b + (\lh {\widehat C})^{\ah\a} E_\a{}^M(\lh\Oh_M)^\bh + (\lh {\widehat C})^{\bh\b} E_\b{}^M (\lh\Oh_M)^\ah
  \end{align}
while because of graded-symmetry $\AA^{ij} = (-)^{ij+i+j} \AA^{ji}$ the rest are $\AA^{\a N} = (-)^N \AA^{N \a}$, $\AA^{\ah N} = (-)^N \AA^{N\ah}$ and $\AA^{\ah \b} = \AA^{\b \ah}$. The matrix $[\AA_{ij}]$ take values
  \begin{equation}
  [\AA_{ij}] = \left(
    \begin{array}{ccc}
    \AA_{MN} & 0 & 0 \\
    0 & 0 & 0 \\
    0 & 0 & 0
    \end{array}
  \right)
  \end{equation}
where the non-trivial block $\AA_{MN}$ is
  \begin{equation}
  \AA_{MN} = G_{MN} - B_{ML}A^{LP} B_{PN} - E_{(M|}{}^\ah E_\ah{}^P B_{P|N)} + E_{(M|}{}^\a E_\a{}^P B_{P|N)}
  \end{equation}
And, finally the matrix $[\AA_i{}^j]$ is given by
  \begin{equation}
  [\AA_i{}^j] = \left(
    \begin{array}{ccc}
    \AA_M{}^N & \AA_M{}^\b & \AA_M{}^\bh \\
    0 & \d_\a{}^\b & 0 \\
    0 & 0 & -\d_\ah{}^\bh
    \end{array}
  \right)
  \end{equation}
where the non-trivial blocks are
  \begin{align}
  \AA_M{}^N & = (-B_{MP} A^{PN} - E_M{}^\ah E_\ah{}^N + E_M{}^\a E_\a{}^N) \\
  \AA_M{}^\b & = (\d_M{}^N - \AA_M{}^N)(\l\O_N)^\b - (\l C)^{\b\bh} E_\bh{}^N B_{NM} \\
  \AA_M{}^\bh & = - (\d_M{}^N + \AA_M{}^N)(\lh\Oh_N)^\bh - (\lh{\widehat C})^{\bh\b} E_\b{}^N B_{NM}
  \end{align}
The other off-diagonal matrix block $[\AA^i{}_j]$ is just the graded-transpose of $[\AA_j{}^i]$ i.e. $\AA^i{}_j = (-)^{i+ij} \AA_j{}^i$.

\section{Conformal invariance and the orthosymplectic supergroup $OSp(d,d|2s)$}\label{app-ortho}

\subsection{$SO(1,1)$ worldsheet symmetry}
The action (\ref{action}) does not show an explicit $SO(1,1)$ invariance. However, to have this symmetry it is enough that the matrix $\AA$ satisfies the $O(d,d)$ condition for the super-case i.e.
  \begin{equation}\label{odd}
  \left(
    \begin{array}{cc}
    \AA_{ik} & \AA_i{}^k \\
    \AA^i{}_k & \AA^{ik}
    \end{array}
  \right) \left(
    \begin{array}{cc}
    0 & \d^k{}_l \\
    \d_k{}^l & 0
    \end{array}
  \right) \left(
    \begin{array}{cc}
    \AA_{lj} & \AA_l{}^j \\
    \AA^l{}_j & \AA^{lj}
    \end{array}
  \right) = \left(
    \begin{array}{cc}
    0 & \d_i{}^j \\
    \d^i{}_j & 0
    \end{array}
  \right)
  \end{equation}
To see this, we will transform the action variables under the $SO(1,1)$ group. On the worldsheet variables the infinitesimal transformation is
  \begin{equation}
  \d \left(
    \begin{array}{c}
    t \\
    \s
    \end{array}
  \right) = \ve\left(
    \begin{array}{cc}
    0 & 1 \\
    1 & 0
    \end{array}
  \right) \left(
    \begin{array}{c}
    t \\
    \s
    \end{array}
  \right)
  \end{equation}
while this implies that
  \begin{equation}
    \begin{aligned}
    \d(\p_t \ZZ^i) & = - \ve \p_\s \ZZ^i \\
    \d(\p_\s \ZZ^i) & = -\ve\p_t \ZZ^i \\
    \d \PP_i & = \ve \PP^\s_i
    \end{aligned}
  \end{equation}
where $\PP^\s_i = \frac{\d L}{\d(\p_\s\ZZ^i)}$. From the action (\ref{action}), we can compute expressions for $\p_t \ZZ^i$ and $\PP^\s_i$
  \begin{equation}\label{transform}
    \begin{aligned}
    \p_t \ZZ^i & = \p_\s \ZZ^j \AA_j{}^i + \PP_j \AA^{ji} \\
    \PP^\s_i & = - \p_\s \ZZ^j \AA_{ji} - \PP_j \AA^j{}_i
    \end{aligned}
  \ \Rightarrow\ (\p_t\ZZ^i\ \ (-\PP^\s_i)) = (\p_\s\ZZ^j\ \ \PP_j)\left(
    \begin{array}{cc}
    \AA_{jk} & \AA_j{}^k \\
    \AA^j{}_k & \AA^{jk}
    \end{array}
  \right) \left(
    \begin{array}{cc}
    0 & \d^k{}_i \\
    \d_k{}^i & 0 
    \end{array}
  \right)
  \end{equation}
This last expression will be useful since several terms in the variation of the action $S = \int (\p_t \ZZ^i \PP_i - H)$ can be written in that form. 

The kinetic term in the action $\p_t \ZZ^i \PP_i$ varies as
  \begin{equation}\label{transf-1}
  \d(\p_t \ZZ^i \PP_i) = -\ve \p_\s\ZZ^i \PP_i + \ve\p_t\ZZ^i \PP^\s_i 
  \end{equation}
whereas this last term can be rewritten as
  \begin{align}
  \p_t\ZZ^i \PP^\s_i & = -\frac12 (\p_t\ZZ\ (-\PP^\s))\left(
    \begin{array}{cc}
    0 & 1 \\
    1 & 0
    \end{array}
  \right) \left(
    \begin{array}{c}
    \p_t\ZZ \\
    -\PP
    \end{array}
  \right) \nonumber \\
  & = -\frac12 (\p_\s \ZZ\ \PP) \AA \left(
    \begin{array}{cc}
    0 & 1 \\
    1 & 0
    \end{array}
  \right) \left(
    \begin{array}{cc}
    0 & 1 \\
    1 & 0
    \end{array}
  \right) \left(
    \begin{array}{cc}
    0 & 1 \\
    1 & 0
    \end{array}
  \right) \AA \left(
    \begin{array}{c}
    \p_\s \ZZ \\
    \PP
    \end{array}
  \right) \nonumber \\
  & = -\frac12 (\p_\s \ZZ\ \PP)\left(
    \begin{array}{cc}
    0 & 1 \\
    1 & 0
    \end{array}
  \right) \left(
    \begin{array}{c}
    \p_\s \ZZ \\
    \PP
    \end{array}
  \right) = - \p_\s \ZZ^i \PP_i \label{transf-2}
  \end{align}
Here we used (\ref{transform}) and (\ref{odd}) in the second and third equalities respectively. Similarly, we can compute the how the Hamiltonian part of the action varies
  \begin{align}
  \d H & = -\ve\ \left(\p_t\ZZ\ \ (-\PP^\s)\right)\AA\left(
    \begin{array}{c}
    \p_\s\ZZ \\
    \PP
    \end{array}
  \right) \nonumber \\
  & = -\ve \left(\p_\s \ZZ\ \PP\right) \AA \left(
    \begin{array}{cc}
    0 & 1 \\
    1 & 0
    \end{array}
  \right) \AA \left(
    \begin{array}{c}
    \p_\s \ZZ \\
    \PP
    \end{array}
  \right) \nonumber \\
  & = -\ve\ \left(\p_\s \ZZ\ \PP\right)\left(
    \begin{array}{cc}
    0 & 1 \\
    1 & 0
    \end{array}
  \right)\left(
    \begin{array}{c}
    \p_\s \ZZ \\
    \PP
    \end{array}
  \right) \label{hamiltonian-transf}
  \end{align}
where in the second equality we used (\ref{transform}); and in the third one, the $O(d,d)$ property of $\AA$ (\ref{odd}). Finally, from (\ref{transf-1}), (\ref{transf-2}) and (\ref{hamiltonian-transf}) the variation of the action $ S = \int(\p_t \ZZ^i \PP_i - H)$ reduces to
  \begin{equation}
  \d S = \ve \int (- 2\p_\s \ZZ^i \PP_i) + \left(\p_\s \ZZ\ \PP\right)\left(
    \begin{array}{cc}
    0 & 1 \\
    1 & 0
    \end{array}
  \right)\left(
    \begin{array}{c}
    \p_\s \ZZ \\
    \PP
    \end{array}
  \right) = 0
  \end{equation}
This ends up the proof the invariance of the action under $SO(1,1)$.

\subsection{Manifest conformal invariance}
We take the action (\ref{action}) and show that the $O(d,d)$ condition implies that the action can be written in a form that shows manifest conformal invariance. We change the indices in this subsection and now denote the letters $M,N,P,Q,...$ as curved labels and $A,B,C,D,...$ as flat ones.

Consider that the tensor $\AA^{MN}$ can be written as
  \begin{equation}
  \AA^{MN} = E_A{}^M \eta^{AB} E_B{}^N\ ,\qquad [\eta^{AB}] = \left(
    \begin{array}{cc}
    \eta^{ab} & 0 \\
    0 & \eta^{\a\b}
    \end{array}
  \right)
  \end{equation}
where $\eta^{ab}$ is invertible and $\eta^{\a\b}$ is not. If we were in the purely bosonic case, we could make $\eta^{\a\b} = 0$ and the dimensions of the matrix $\eta^{ab}$ would be the rank of $\AA^{MN}$. Let's define the following fields
  \begin{equation}
    \begin{aligned}
    d_A & := E_A{}^M \PP_M \\
    \Pi_t^A & := \p_t \ZZ^M E_M{}^A \\
    \Pi_\s^A & := \p_\s \ZZ^M E_M{}^A \\
    \left( 
      \begin{array}{cc}
      A_{AB} & A_A{}^B \\
      A^A{}_B & A^{AB}
      \end{array}
    \right) & :=  \left(
      \begin{array}{cc}
      E_A{}^M & 0 \\
      0 & E_M{}^A
      \end{array}
    \right) \left(
      \begin{array}{cc}
      \AA_{MN} & \AA_M{}^N \\
      \AA^M{}_N & \AA^{MN}
      \end{array}
    \right) \left(
      \begin{array}{cc}
      E_B{}^N & 0 \\
      0 & E_N{}^B
      \end{array}
    \right)
    \end{aligned}
  \end{equation}
This new matrix with flat indices also satisfies the $O(d,d)$ condition. The action (\ref{action}) can be written as
  \begin{equation}
  S = \int \Pi_t^A d_A - \frac12 (\Pi_\s^A\ d_A) \left(
    \begin{array}{cc}
    A_{AB} & A_A{}^B \\
    A^A{}_B & A^{AB}
    \end{array}
  \right) \left(
    \begin{array}{c}
    \Pi_\s^B \\
    d_B
    \end{array}
  \right)
  \end{equation}
while the $O(d,d)$ condition can be broken down into
  \begin{align}
  A^{AC} A_C{}^B + A^A{}_C A^{CB} & = 0 \quad \left\{
    \begin{aligned}
    A_a{}^c\eta_{cb} + \eta_{ac} A^c{}_b = 0 \\
    A_a{}^\b + \eta_{ab} A^b{}_\a \eta^{\a\b} = 0 \\
    \eta^{\a\g} A_\g{}^\b + A^\a{}_\g \eta^{\g\b} = 0
    \end{aligned} 
  \right. \\
  A_A{}^C A_C{}^B + A_{AC} A^{CB} & = \d_A{}^B \quad \left\{ 
    \begin{aligned}
    A_{ab} - A_a{}^c\eta_{cd} A_b{}^d + A_a{}^\a A_\a{}^c \eta_{cb} = \eta_{ab} \\
    A_\a{}^\g A_\g{}^\b + (A_{\a\g} - A_\a{}^a\eta_{ab}A^b{}_\g)\eta^{\g\b} = \d_\a{}^\b \\
    A_a{}^b A_b{}^\b + (A_{a\g} + \eta_{ab}A^b{}_\a A^\a{}_\g)\eta^{\g\b} = 0 \\
    A_\a{}^\b A_\b{}^b \eta_{ba} + A_\a{}^c A_c{}^b \eta_{ba} + A_{\a a} = 0
    \end{aligned}
  \right. \\
  A_A{}^C A_{CB} + A_{AC} A^C{}_B & = 0
  \end{align}

Using the equations of motion for $\PP_M$, we obtain the following expressions
  \begin{equation}
  0 = \frac{\d S}{\d \PP_M} E_M{}^A = \Pi_t^A - \eta^{AB} d_B - \Pi_\s^B A_B{}^A
  \end{equation}
that can be used to solve only for $d_a$
  \begin{align}
  d_a & = \Pi^b_t\eta_{ba} - \Pi_\s^A A_A{}^b \eta_{ba} \\
   d_\b\eta^{\b\a} & = \Pi_t^\a - \Pi^A_\s A_A{}^\a
  \end{align}
We use the first equation to replace it back into the action and obtain
  \begin{align}
  S & = \int \left[ \frac12 \left( \Pi_t^a - \Pi_\s^A A_A{}^a\right)\eta_{ab}\left( \Pi_t^b - \Pi_\s^A A_A{}^b\right) - \frac12 \Pi^A_\s A_{AB} \Pi^B_\s \right. \nonumber \\
  & \qquad  \left. + \left(\Pi_t^\a - \Pi_\s^A A_A{}^\a\right) d_\a - \frac12 d_\a \eta^{\a\b} d_\b \right]\\
  & = \int \left[ \frac12 \Pi_t^a\eta_{ab}\Pi_t^b -\frac12 \Pi_\s^a(A_{ab}-A_a{}^c\eta_{cd}A_d{}^b)\Pi_\s^b - \frac12 \Pi_t^a \eta_{ab} A^b{}_c\Pi^c_\s - \frac12 \Pi^a_\s A_a{}^c\eta_{cb}\Pi_t^b \right. \nonumber \\
  & \qquad \left. -\frac12 \Pi_\s^\a(A_{\a\b} - A_\a{}^a\eta_{ab}A_\b{}^b)\Pi_\s^\b + \frac12 d_\a \eta^{\a\b} d_\b - \Pi_\s^a A_{a\a} \Pi_\s^\a - \Pi_\s^\a A_\a{}^a\eta_{ab}(\Pi_t^b - \Pi^c_\s A_c{}^b) \right]
  \end{align}

Finally, we define $d_A^\s := \frac{\d L}{\d \Pi_\s^A}$ and using the value for
  \begin{equation}
  d_\a^\s = - (A_{\a\b} - A_\a{}^a\eta_{ab}A_\b{}^b) \Pi^\b_\s - A_\a{}^\b d_\b + A_\a{}^\b A_\b{}^b \eta_{ba} \Pi^a_\s - A_\a{}^a \eta_{ab}\Pi^b_t
  \end{equation}
we can express the action as
  \begin{align}
  S & = \int \frac12 \Pi_t^a\eta_{ab} \Pi_t^b - \frac12 \Pi_\s^a\eta_{ab}\Pi_\s^b - \frac12 \Pi_t^a\eta_{ac}A^c{}_b\Pi^b_\s - \frac12 \Pi_\s^a A_a{}^c\eta_{cb}\Pi^b_t \nonumber \\
  & \qquad + \frac12 \Pi^\a_\s d^\s_\a + \frac12 \Pi_t^\a d_\a^t + \frac12 \Pi^\a_t A_\a{}^b\eta_{ba}\Pi_\s^a - \frac12 \Pi_\s^\a A_\a{}^b\eta_{ba}\Pi_t^a \\
  & = \int 2 \Pi_z^a (\eta_{ab} + A_a{}^c\eta_{cb})\Pi_\zb^b + \Pi^\a_z d_{\zb \a} + \Pi^\a_\zb d_{z \a} + \Pi^\a_z A_\a{}^b \eta_{ba} \Pi^a_\zb - \Pi^\a_\zb A_\a{}^b\eta_{ba}\Pi^a_z
  \end{align}
We have arrived at an action that shows manifest conformal invariance on each term. The $(z,\zb)$ worldsheet quantities are defined as
  \begin{equation}
  \begin{aligned}
  \Pi_t^A & = \Pi^A_\zb + \Pi^A_z \\
  \Pi_\s^A & = \Pi^A_\zb - \Pi^A_z \\
  d_A^t & = d_z + d_\zb \\
  d_A^\s & = d_z - d_\zb
  \end{aligned}
  \end{equation}

\subsection{Orthosymplectic supergroup $OSp(d,d|2s)$}
The $O(d,d)$ condition (\ref{odd-condition}) that a generalized metric $A$ satisfies was naturally extended to the super-case (\ref{osp-condition}), and we have just proved that it is a sufficient condition to have an $SO(1,1)$ worldsheet symmetry of the action. However, we can give this condition another interpretation if we write the matrix representation as a super-matrix i.e. a matrix with bosonic entries in the block diagonals and fermionic ones in the off-diagonal. We achieve this by changing the order of the entries of the sections contracting with $\AA$ in the action (\ref{action}) i.e.
  \begin{equation}
  \left(
    \begin{array}{c}
    \p_\s \ZZ^i \\
    \PP_i
    \end{array}
  \right) = \left(
    \begin{array}{c}
    \p_\s x^m \\
    \p_\s \t^\m \\
    P_m \\
    P_\m
    \end{array}
  \right)\quad \longrightarrow \quad \left(
    \begin{array}{c}
    \p_\s x^m \\
    P_m \\
    \p_\s \t^\m \\
    P_\m
    \end{array}
  \right)
  \end{equation}
where $(x^m)$ and $(\t^\m)$ simply represent bosonic and fermionic coordinates parametrizing the target space supermanifold $M$. Then, the action (\ref{action}) takes the form
  \begin{equation}
  S = \int (\p_t x^m P_m + \p_t \t^\m P_\m) -\frac12 (\p_\s x^m\ P_m\ \p_\s\t^\m\ P_\m)\AA' \left(
    \begin{array}{c}
    \p_\s x^n \\
    P_m \\
    \p_\s \t^\m \\
    P_\m
    \end{array}
  \right)
  \end{equation}
where
  \begin{equation}
  \AA' = \left(
    \begin{array}{c|cc|c}
    \d_m{}^n & & & \\
    \hline
    & & \d^m{}_n & \\
    & \d_\m{}^\n & & \\
    \hline
    & & & \d^\m{}_\n
    \end{array}
  \right) \AA \left(
    \begin{array}{c|cc|c}
    \d^m{}_n & & & \\
    \hline
    & & -\d^\m{}_\n & \\
    & \d_m{}^n & & \\
    \hline
    & & & \d_\m{}^\n
    \end{array}
  \right)
  \end{equation}
Then the condition (\ref{osp-condition}) on $\AA$ implies that $\AA'$ will satisfy
  \begin{equation}
  (\AA')^{st} \left(
    \begin{array}{cc|cc}
    0 & 1 & & \\
    1 & 0 & & \\
    \hline
    & & 0 & -1 \\
    & & 1 & 0
    \end{array}
  \right) \AA' = \left(
    \begin{array}{cc|cc}
    0 & 1 & & \\
    1 & 0 & & \\
    \hline
    & & 0 & -1 \\
    & & 1 & 0
    \end{array}
  \right)
  \end{equation}
where $(\ldots)^{st}$ is the super-transpose acting on a supermatrix. This is no more than the defining property of the orthosymplectic supergroup $OSp(d,d|2s)$; then, the supermatrix $\AA'$ belongs to $OSp(d,d|2s)$.

\section{Generalized metrics}\label{gen-met}
In this appendix we extend the definition of a generalized metric to include cases with indefinite signature. We will show that this generalized metric with indefinite signature also has other equivalent definitions in terms of subbundles of $TM\oplus T^*M$. Afterwards, we study what is usually known in the literature as generalized metric on $TM\oplus T^*M$.

\subsection{Generalized metric with signature $(d+a,d-a)$}
Given a manifold $M$ of dimension $d$, we can construct its {\it generalized} tangent bundle $TM\oplus T^*M$ where its sections are pairs of a vector and $1$-form fields. In this bundle there exists a canonical inner product of signature $(d,d)$
  \begin{equation}
    \begin{aligned}
    \langle \cdot, \cdot\rangle : \G(TM\oplus T^*M) \times \G(TM\oplus T^*M)\quad &\rightarrow \quad C^\infty (M) \\
    (V_1\ F_1) \times (V_2\ F_2)\hspace{1.5cm} & \mapsto\quad \i_{V_1} F_2 + \i_{V_2} F_1
    \end{aligned}
  \end{equation}
Once we have chosen a coordinate system $(x^m)$ for $M$, we have a basis on $TM\oplus T^*M$ and a matrix representation for $\langle\cdot,\cdot \rangle$
  \begin{equation}
  \langle\cdot,\cdot \rangle \quad \longrightarrow \quad \left(
    \begin{array}{cc}
    0 & \d_m{}^n \\
    \d^m{}_n & 0
    \end{array}
  \right)
  \end{equation}
Now we define a generalized metric with a signature as
  \begin{definition}
  A generelized metric $A$ with signature is defined as a fiber-wise bilinear form on $TM\oplus T^*M$
    \begin{equation}
    A: \G(TM\oplus T^*M) \times \G(TM\oplus T^*M) \longrightarrow C^\infty(M)
    \end{equation}
  such that it is symmetric, has signature $(d+a, d-a)$ and satisfies the $O(d,d)$ condition 
    \begin{equation}
    A^t \left(
      \begin{array}{cc}
      0 & 1 \\
      1 & 0
      \end{array}
    \right) A = \left(
      \begin{array}{cc}
      0 & 1 \\
      1 & 0
      \end{array}
    \right)
    \end{equation}
  \end{definition}

  \begin{definition}
  A generalized metric on $E = TM\oplus T^*M$ is an automorphism $U:E\rightarrow E$ such that
    \begin{itemize}
    \item $U^2 = 1$
    \item $A_U(-,-) := \langle U(-) ,- \rangle$ defines a fiber-wise inner product of signature $(d+a, d-a)$
    \end{itemize}
  \end{definition}

We can see that a generalized metric $U$ will have the following properties. First of all, it will be diagonalizable with the only eigenvalues being $+1$ and $-1$. Also, $U$ will be orthogonal w.r.t the inner product $\langle\cdot,\cdot\rangle$ because of the symmetry property of $A_U$. And finally, the eigenspaces of $U$ (denoted by $E_\pm$) will be orthogonal to each other w.r.t to the inner product $\langle\cdot,\cdot\rangle$ i.e. $E_+ \perp_{\langle,\rangle} E_-$.

  \begin{definition}
  A generalized metric with signature on $E=TM\oplus T^*M$ is a subbundle $E_+$ such that $\langle\cdot,\cdot\rangle|_{E_+}$ has signature $(p,q)$.
  \end{definition}
Notice that the orthogonal complement to $E_+$ will satisfy $E = E_+ \oplus (E_+)^\perp$ since $\langle\cdot,\cdot\rangle$ is non-degenerate on $E_+$. Furthermore, $\langle\cdot,\cdot\rangle$ will also be non-degenerate on $(E_+)^\perp$ since $E_+\cap (E_+)^\perp = \{0\}$, and its signature there will be $(d-p,d-q)$. This can be seen by taking a normalizing basis for each $E_+$ and $(E_+)^\perp$ and realizing that the signature of $\langle\cdot,\cdot\rangle$ on the full space is $(d,d)$.

  \begin{theorem}
  All three definitions are equivalent.
  \end{theorem}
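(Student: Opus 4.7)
The plan is to prove the three definitions equivalent via a cycle $(1) \Rightarrow (2) \Rightarrow (3) \Rightarrow (1)$, tracking the signature parameter $a$ through its three avatars: $(d+a, d-a)$ as the signature of $A$, the same signature for $A_U = \langle U(\cdot), \cdot\rangle$, and $(p,q)$ with $p+q = d$ and $p - q = a$ for $\langle\cdot,\cdot\rangle|_{E_+}$. The bridge between (1) and (2) is the assignment $U = A\eta$, where $\eta$ is the matrix of the canonical pairing; the bridge between (2) and (3) is the eigenspace decomposition of $U$.

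For $(1) \Leftrightarrow (2)$: given $A$ as in Definition~1, define $U := A\eta$. Symmetry of $A$ combined with the $O(d,d)$ property $A^t \eta A = \eta$ gives $A\eta A = \eta$, whence $U^2 = A\eta A\eta = \eta^2 = 1$, and then $A_U$ reconstructs a bilinear form equivalent to $A$ (via conjugation by $\eta$) of the same signature. Conversely, given $U$ satisfying Definition~2, symmetry of $A_U$ forces $U$ to be self-adjoint for $\langle\cdot,\cdot\rangle$, and combined with $U^2 = 1$ this gives $\langle U(X), U(Y)\rangle = \langle X, Y\rangle$. In matrix form this orthogonality is precisely the $O(d,d)$ condition for the associated $A$.

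For $(2) \Leftrightarrow (3)$: given $U$, take $E_+ := \ker(U - 1)$; the involution property provides $E = E_+ \oplus E_-$. Applying orthogonality of $U$ to $X \in E_+$, $Y \in E_-$ yields $\langle X, Y\rangle = -\langle X, Y\rangle$, so $E_+ \perp E_-$ with respect to $\langle\cdot,\cdot\rangle$, and nondegeneracy of $\langle\cdot,\cdot\rangle|_{E_+}$ follows. Since $A_U$ is block-diagonal as $\langle\cdot,\cdot\rangle|_{E_+} \oplus (-\langle\cdot,\cdot\rangle|_{E_-})$ and the total signature of $\langle\cdot,\cdot\rangle$ is $(d,d)$, a short computation shows that the signature of $A_U$ equals $(d + p - q, d - (p-q))$ when $\langle\cdot,\cdot\rangle|_{E_+}$ has signature $(p,q)$ with $p+q = d$, giving $a = p - q$. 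Conversely, given $E_+$ with $\langle\cdot,\cdot\rangle|_{E_+}$ nondegenerate of rank $d$, set $E_- := E_+^\perp$; nondegeneracy ensures $E = E_+ \oplus E_-$, and $U := \mathrm{id}_{E_+} \oplus (-\mathrm{id}_{E_-})$ satisfies $U^2 = 1$ with $A_U$ of the claimed signature.

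The essential difficulty is not any individual implication, each of which reduces to a short fiberwise linear-algebra computation, but the signature bookkeeping, and in particular the implicit requirement that in Definition~3 the subbundle $E_+$ has rank exactly $d$ (equivalently, is maximal among subbundles on which the canonical pairing is nondegenerate); otherwise $E_+^\perp$ need not complement $E_+$ and the reverse construction fails. I would state this rank condition explicitly and note that $E_\pm$ are genuine smooth subbundles because $U$ has locally constant eigenvalues $\pm 1$, so the constructions globalize without extra work.
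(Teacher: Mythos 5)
Your proposal follows the same route as the paper's own proof: Definitions 1 and 2 are identified through $U=A\eta$ (with $U^2=1$ equivalent to the $O(d,d)$ condition via symmetry of $A$), and Definitions 2 and 3 through the $(\pm1)$-eigenspace decomposition, with the signature bookkeeping $a=p-q$. All the individual implications you write down are correct.

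The one place you go wrong is the closing paragraph, where you assert that Definition 3 implicitly requires $E_+$ to have rank exactly $d$, on the grounds that otherwise ``$E_+^\perp$ need not complement $E_+$ and the reverse construction fails.'' That justification is incorrect: the splitting $E=E_+\oplus E_+^\perp$ needs only that $\langle\cdot,\cdot\rangle|_{E_+}$ be nondegenerate (which is exactly what assigning it a signature $(p,q)$ with $p+q=\dim E_+$ means), not that $E_+$ be maximal. Indeed, if $\dim E_+=r$ and $\langle\cdot,\cdot\rangle|_{E_+}$ has signature $(p,q)$ with $p+q=r$, then $\langle\cdot,\cdot\rangle|_{E_+^\perp}$ has signature $(d-p,d-q)$, and $A_U$ is block-diagonal with signature $(p+(d-q),\,q+(d-p))=(d+(p-q),\,d-(p-q))$, which is of the required form $(d+a,d-a)$ for \emph{every} $r$. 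This is why the paper's proof keeps $\dim E_+$ arbitrary and records $p=(\dim E_+ +a)/2$, $q=(\dim E_+ -a)/2$. By restricting to $p+q=d$ in both directions of $(2)\Leftrightarrow(3)$ you prove slightly less than the stated theorem: for example $U=\pm\mathrm{id}$ satisfies Definition 2 with $a=0$ but has $\dim E_+=2d$ or $0$. Only in the positive-definite case $a=d$ (treated in the paper's second theorem) is $\dim E_+=d$ actually forced.
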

  \begin{proof}
  Showing that Def.1 and Def.2 are equivalent is almost trivial since $U$ and $A$ are related by the relation $A(-,-)= \langle U(-),-\rangle$ or in matrix form
    \begin{equation}
    A = U\left(
      \begin{array}{cc}
      0 & 1 \\
      1 & 0
      \end{array}
    \right)
    \end{equation}
  which connects $U$ and $A$ in a one-to-one relation.
  
  To show that Def.2 implies Def.3, we take the $(+1)$-eigenspace of $U$ as our subbundle $E_+$. And, since $E = E_+\oplus E_-$ with $E_+ \perp_{\langle,\rangle} E_-$, we have that $\langle\cdot,\cdot\rangle$ is non-degenerate on $E_+$ and $E_-$, this means it has respectively signatures $(p,q)$ and $(d-p,d-q)$. We can take a basis of each $E_+$ and $E_-$, $\{e^+_i\}$ and $\{e^-_m\}$ respectively, such that
  \begin{equation}
  \langle e^+_i,e^+_j\rangle = (\eta^+)_{ij}\ ,\quad \langle e^-_m, e^-_n\rangle = (\eta^-)_{mn}
  \end{equation}
where $\eta^+$ and $\eta^-$ are the signature matrices $(p,q)$ and $(n-p,n-q)$. Then we can compute the matrix representation of $A_U$ on the basis $\{e^+_i,e^-_m\}$
  \begin{equation}\label{matrixrep}
  A_U = \left(
    \begin{array}{cc}
    (\eta^+)_{ij} & 0 \\
    0 & -(\eta^-)_{mn}
    \end{array}
  \right)
  \end{equation}
which means that $A_U$ has signature $(d + (p-q), d-(p-q))$ i.e. $a = (p-q)$. Finally, the signature of $\langle\cdot,\cdot\rangle$ on subbundle $E_+$ is given by $p = (\dim E_+ + a)/2$ and $q = (\dim E_+ - a)/2$.

  Similarly, to show that Def.3 implies Def.2 we proceed as follows. Take the orthogonal complement to $E_+$ denoted by $E_- := (E_+)^\perp$. Since $\langle\cdot,\cdot\rangle$ is non-degenerate on $E_+$ with signature $(p,q)$ we have that $E= E_+\oplus E_-$ and it is also non-degenerate on $E_-$ with signature $(d-p,d-q)$. Then we can define the operator $U:E\rightarrow E$ as $U(v^\pm) = \pm v^\pm$ for $v^\pm \in E_\pm$ (and by linear extension to the rest of the space). We can now prove that this $U$ has the properties of Def.2. Thus, obviously $U^2 =1$. And, the signature of $A_U$ can be determined by taking again the basis of $E_+$ $\{e^+_i\}$ and of $E_-$ $\{e^-_m\}$ and computing the matrix representation of $A_U$ on $\{e^+_i, e^-_m\}$. We obtain the same as in (\ref{matrixrep}) which means that the signature of $A_U$ is of type $(d+a,d-a)$ with $a = p-q$.
  \end{proof}

\subsection{(Positive definite) Generalized metric}
The particular case where $a=d$ in Definition 1 and 2 or equivalently $(p,q)=(d,0)$ in Definition 3 describes what is simply known in the literature as {\it generalized metric}. The implication of this is that the $(\pm 1)$-eigenspaces of $U_A$ can be solved as a the graphs of $(\pm G + B)$ where $G$ is a Riemannian metric and $B$ a two-form on $M$.

  \begin{theorem}
  Given a generalized metric A on $TM\oplus T^*M$. The eigenspaces of $U_A$ are $(TM\oplus T^*M)_\pm = graph(\pm G+ B)$.
  \end{theorem}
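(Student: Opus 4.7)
The plan is to exploit positive-definiteness of $A$ together with the fact that $TM$ and $T^*M$ are maximal isotropic subbundles of the canonical pairing $\langle\cdot,\cdot\rangle$ of signature $(d,d)$ on $TM\oplus T^*M$. First I would note that $A(\cdot,\cdot) = \langle U_A(\cdot),\cdot\rangle$ restricts to $\pm\langle\cdot,\cdot\rangle$ on each eigenspace $E_\pm$, so positivity of $A$ makes $\langle\cdot,\cdot\rangle$ positive definite on $E_+$ and negative definite on $E_-$. Combined with the orthogonal decomposition $TM\oplus T^*M = E_+\oplus E_-$ and the fact that the total signature is $(d,d)$, this pins down $\dim E_\pm = d$.

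Next I would show that each $E_\pm$ is the graph of a bundle map $TM\to T^*M$. Any nonzero element of $TM\subset TM\oplus T^*M$ is isotropic for $\langle\cdot,\cdot\rangle$, while the pairing is definite on $E_\pm$, forcing $E_\pm \cap TM = \{0\}$. By the dimension count just established, the canonical projection $E_\pm \to TM$ is then an isomorphism, so $E_\pm = \mathrm{graph}(\omega_\pm)$ for unique bundle maps $\omega_\pm\colon TM\to T^*M$.

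Splitting $\omega_\pm = G_\pm + B_\pm$ into their symmetric and antisymmetric parts, a direct computation gives $\langle(V,\omega_+(V)),(W,\omega_+(W))\rangle = 2G_+(V,W)$, so positivity of $\langle\cdot,\cdot\rangle|_{E_+}$ makes $G_+$ a Riemannian metric and, by the same token, $-G_-$ is Riemannian. The crucial step is then the cross term: the orthogonality $E_+\perp E_-$ with respect to $\langle\cdot,\cdot\rangle$ translates into
\[
  \omega_+(V)(W) + \omega_-(W)(V) = 0 \qquad \text{for all } V,W \in TM,
\]
equivalently $\omega_- = -\omega_+^t$. In the symmetric/antisymmetric split this becomes $G_- = -G_+$ and $B_- = B_+$, so setting $G := G_+$ and $B := B_+$ yields $E_\pm = \mathrm{graph}(\pm G + B)$, as claimed.

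Almost everything above is a dimension-and-signature count followed by the standard graph construction; the only step with real content is the orthogonality identity in the last paragraph, which is what forces both eigenspaces to be encoded by a single Riemannian metric $G$ and a single two-form $B$ rather than by two independent pairs $(G_\pm,B_\pm)$.
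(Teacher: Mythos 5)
Your proof is correct and follows essentially the same route as the paper's: definiteness of the canonical pairing on the eigenspaces forces $\dim E_\pm = d$ and trivial intersection with the isotropic subbundle $TM$, the eigenspaces are then graphs, and orthogonality $E_+\perp E_-$ yields $\omega_-=-\omega_+^t$, i.e.\ a single $G$ and $B$. The only difference is that you spell out the cross-term computation and the symmetric/antisymmetric bookkeeping slightly more explicitly than the paper does.
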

  \begin{proof}
  The canonical inner product is positive definite on $(TM\oplus T^*M)_+$ and negative definite on $(TM\oplus T^*M)_-$ as we can see in
    \begin{equation}
    A(s_\pm, s_\pm) = \langle U_A(s_\pm), s_\pm \rangle = \pm \langle s_\pm, s_\pm\rangle > 0
    \end{equation}
  for $s_\pm \in (TM\oplus T^*M)_\pm$. Then, these eigenspaces are forced to be of dimension $d=dim(M)$, since they are complementary and the signature of $\langle\cdot,\cdot\rangle$ on $TM\oplus T^*M$ is $(d,d)$. Also, the intersection of $(TM\oplus T^*M)_\pm$ with either subbundles $TM$ or $T^*M$ is trivial since they are isotropic (i.e. for every section $s$ belonging to them we have $\langle s, s\rangle = 0$). Finally, these last two facts imply that the bundles $(TM\oplus T^*M)_\pm$ can be expressed as the graphs of endomorphisms $\mathcal A_\pm$ in $M$ i.e. $(TM\oplus T^*M)_\pm = graph(\mathcal A_\pm):= \{(V,F)| F=\i_V \mathcal A_\pm\}$. Furthermore, because the $(\pm 1)$- eigenspaces are orthogonal w.r.t. the inner product $\langle\cdot,\cdot\rangle$ we have
    \begin{equation}
    \mathcal A_- = -(\mathcal A_+)^t
    \end{equation}
  Thus, if we define $G$ as the symmetric part of $\mathcal A_+$ and $B$ as the antisymmetric one, we can write $(TM\oplus T^*M)_\pm = graph(\pm G+ B)$.
  \end{proof}

By knowing that $(TM\oplus T^*M) = (TM\oplus T^*M)_+ \oplus (TM\oplus T^*M)_- = graph(+G+B)\oplus graph(-G+B)$, we can easily decompose any section onto those subbundles 
  \begin{equation}
  (V,F)=(V,F)_+ + (V,F)_-\quad\text{where}\quad
    \begin{aligned}
    (V,F)_+ = (V_1, \i_{V_1} (+G+B)) \\
    (V,F)_- = (V_2, \i_{V_2} (-G+B))
    \end{aligned}
  \end{equation}
then, solving for $V_1$ and $V_2$ we obtain
  \begin{equation}
    \begin{aligned}
    V_1 = \frac12 ( V + F G^{-1} - \i_V B G^{-1} ) \\
    V_2 = \frac12 ( V - F G^{-1}  + \i_V B G^{-1} )
    \end{aligned}
  \end{equation}
which implies the very well-known form of a generalized metric
  \begin{equation}
  A = \left(
    \begin{array}{cc}
    G - BG^{-1} B & - BG^{-1} \\
    G^{-1} B & G^{-1}
    \end{array}
  \right)
  \end{equation}

\section{Super-geometry conventions}
In this appendix we give our conventions to deal with the space of super differential forms. We consider objects in this space to be $\ZZ\times\ZZ_2$-graded where the $\ZZ$-grading is given by usual one on differential forms and the $\ZZ_2$-grading corresponds to its bosonic or fermionic nature. For example, given two super differential forms $A$ and $B$ graded $(p,|A|)$ and $(q,|B|)$ respectively, the exterior product satisfies
  \begin{equation}
  A\wedge B = (-)^{pq + |A||B|} B\wedge A
  \end{equation}
In this convention the $\ZZ\times\ZZ_2$ grading of the exterior derivative%
  \footnote{
  We only consider the left exterior derivative i.e. $d(A_p\wedge B_q) = dA_p\wedge B_q + (-)^{p}A_p\wedge dB_q$.
  }
$d$ is $(1,0)$ and the one for the interior product $\i_V$ is $(-1,|V|)$ where $|V|$ is the $\ZZ_2$-grading of the vector field $V$. Furthermore, Cartan's formula is
  \begin{equation}
  \Lc_V = [\i_V, d] = \i_V d - (-)^{(-1)1+|V|0} d \i_V = \i_V d + d \i_V
  \end{equation}
Given a coordinate system $\{Z^M \}$ in the supermanifold, we define the components of a $p$-form $B$ to be $B_{M_1\ldots M_p}:= B(\p_{M_1},\ldots,\p_{M_p})$. Thus, we can write $B$ in the basis $\{dZ^M\}$ as
  \begin{equation}
  B = dZ^{M_p}\wedge\ldots\wedge dZ^{M_1} \frac{1}{p!} B_{M_1\ldots M_p}
  \end{equation}

In section \ref{sect-sugra} we use the holomorphicity equations to obtain the type II SUGRA constraints. This computation required the following expressions for the action of the Lie derivative on tensors of the form $A_{MN}$, $A^{MN}$ and $A_M{}^N$
  \begin{align}
  (\Lc_V A)_{MN} & = V^P \p_P A_{MN} + (-)^{VM} \p_M V^P A_{PN} + (-)^{MN + VN} \p_N V^P A_{PM} \\
  (\Lc_V A)^{MN} & = V^P \p_P A^{MN} - (-)^{VM} A^{MP} \p_P V^N - (-)^{M+N+MN + VN} A^{NP} \p_P V^M \\
  (\Lc_V A)_M{}^N & = V^P \p_P A_M{}^N - (-)^{VM} A_M{}^P \p_P V^N + (-)^{VM} \p_M V^P A_P{}^N
  \end{align}
The SUGRA constraints are expressed in terms of the torsion and curvature. These tensors depend on local frames (vielbein) $E^A$ and a connection $\O_A{}^B$
  \begin{align}
  E^A & = dZ^M (E^A)_M (-)^{MA} =: dZ^M E_M{}^A \\
  \O_A{}^B & = dZ^M (\O_A{}^B)_M (-)^{M(A+B)} =: dZ^M \O_{MA}{}^B
  \end{align}
Then we define the torsion and curvature $2$-forms as covariant exterior derivatives
  \begin{align}
  T^A & = D E^A = dE^A - E^B \wedge \O_B{}^A \\
  R_A{}^B & = D \O_A{}^B = d\O_A{}^B  - \O_A{}^C \wedge \O_C{}^B
  \end{align}
We consider the labels $M,N,P,Q,...$ to be curved indices while $A,B,C,D,...$ to be flat indices. The torsion and curvatures are expressed in components as
  \begin{align}
  T_{AB}{}^C & = (E_A{}^M \p_M E_B{}^N E_N{}^C - E_A{}^M \O_{MB}{}^C) - (-)^{AB}(A\leftrightarrow B) \\
  R_{MNC}{}^D & = (\p_M \O_{NC}{}^D - (-)^{N(C+E)}\O_{MC}{}^E \O_{NE}{}^D) - (-)^{MN} (M\leftrightarrow N)
  \end{align}
For the pure spinor case, we have
  \begin{equation}
  \O_{MA}{}^B = \left(
    \begin{array}{ccc}
    0 & 0 & 0 \\
    0 & \O_{M\a}{}^\b & 0 \\
    0 & 0 & \Oh_{M\ah}{}^\bh
    \end{array}
  \right)
  \end{equation}
Then the expressions for the curvature are simplified to
  \begin{align}
  R_{AB\a}{}^\b & = (-)^{NB} E_A{}^N E_B{}^M \left[ (\p_M \O_{N\a}{}^\b - \O_{M\a}{}^\g \O_{N\g}{}^\b) - (-)^{MN} (M\leftrightarrow N)\right] \\
  \widehat R_{AB\ah}{}^\bh & = (-)^{NB} E_A{}^N E_B{}^M \left[ (\p_M \Oh_{N\ah}{}^\bh - \Oh_{M\ah}{}^\gh \Oh_{N\gh}{}^\bh) - (-)^{MN} (M\leftrightarrow N)\right]
  \end{align}

\end{document}